\newcounter{resultnum}[section]\setcounter{resultnum}{0}
\newcounter{conclusionnum}[section]\setcounter{conclusionnum}{0}
\newcounter{conditionnum}[section]\setcounter{conditionnum}{0}
\newcounter{conjecturenum}[section]\setcounter{conjecturenum}{0}
\newtheorem{example}{Example}[section]
\newcounter{examplenum}[section]\setcounter{examplenum}{0}
\newcounter{exercisenum}[section]\setcounter{exercisenum}{0}
\newtheorem{lemma}{Lemma}[section]
\newcounter{lemmanum}[section]\setcounter{lemmanum}{0}
\newcounter{notationnum}[section]\setcounter{notationnum}{0}
\newtheorem{theorem}{Theorem}[section]
\newcounter{theoremnum}[section]\setcounter{theoremnum}{0}
\newtheorem{definition}{Definition}[section]
\newcounter{definitionnum}[section]\setcounter{definitionnum}{0}
\newcounter{corollarynum}[section]\setcounter{corollarynum}{0}
\newtheorem{remark}{Remark}[section]
\newcounter{remarknum}[section]\setcounter{remarknum}{0}
\newtheorem{proposition}{Proposition}[section]
\newcounter{propositionnum}[section]\setcounter{propositionnum}{0}
\newtheorem{convention}{Convention}[section]
\newcounter{conventionnum}[section]\setcounter{conventionnum}{0}
\newcounter{acknowledgementnum}[section]\setcounter{acknowledgementnum}{0}
\newcounter{algorithmnum}[section]\setcounter{algorithmnum}{0}
\newcounter{axiomnum}[section]\setcounter{axiomnum}{0}
\newcounter{casenum}[section]\setcounter{casenum}{0}
\newcounter{claimnum}[section]\setcounter{claimnum}{0}
\newcounter{summarynum}[section]\setcounter{summarynum}{0}
\newcounter{problemnum}[section]\setcounter{problemnum}{0}
\newenvironment{proof}[1][]{\textbf{Proof.} }{}
\begin{document}

\title{ Spectral Functionals, Nonholonomic Dirac Operators, and
Noncommutative Ricci Flows}
\date{May 30, 2009}
\author{ Sergiu I. Vacaru\thanks{
Sergiu.Vacaru@gmail.com ;\ http://www.scribd.com/people/view/1455460-sergiu }
\\
{\quad} \\
{\small {\textsl{ "Al. I. Cuza"  Ia\c si University}, Science Department,}
}\\
{\small {\textsl{\ 54 Lascar Catargi  street, 700107, Ia\c si, Romania}} } }
\maketitle

\begin{abstract}
We formulate a noncommutative generalization of the Ricci flow theory in the framework of spectral action approach to noncommutative geometry. Grisha Perelman's functionals are generated as commutative versions of certain spectral functionals defined by nonholonomic Dirac operators and
corresponding spectral triples. We derive the formulas for spectral averaged energy and entropy functionals and state the conditions when such values describe (non)holonomic Riemannian configurations.

\vskip5pt

\textbf{Keywords:}\ Noncommutative geometry, nonholonomic manifolds,
spinors, Dirac operators, Ricci flows, Perelman functionals

\vskip3pt

MSC2000:\ 53C44, 58B34, 58B20, 83C65

PACS2008:\ 02.40.Gh, 04.90.+e, 11.10.Nx, 11.30.Ly
\end{abstract}


\tableofcontents

\section{ Introduction}

The Ricci flow equations \cite{ham1} and Perelman functionals \cite{gper1}
can be re--defined with respect to moving frames subjected to nonholonomic
constraints \cite{vnhrf2}.\footnote{%
there are used also some other equivalent terms like anholonomic, or
non--integrable, restrictions/ constraints; we emphasize that in classical
and quantum physics the field and evolution equations play a fundamental
role but together with certain types of constraints and broken symmetries; a
rigorous mathematical approach to modern physical theories can be elaborated
only following geometric methods from 'nonholonomic field theory and
mechanics'} Considering models of evolution of geometric objects in a form
adapted to certain classes of nonholonomic constraints, we proved that
metrics and connections defining (pseudo) Riemannian spaces may flow into
similar nonholonomically deformed values modelling generalized Finsler and
Lagrange configurations \cite{vrfg}, with symmetric and nonsymmetric
metrics, or possessing noncommutative symmetries \cite{vncg}.

The original Hamilton--Perelman constructions were for unconstrained flows
of metrics evolving only on (pseudo) Riemannian manifolds. There were proved
a set of fundamental results in mathematics and physics (for instance,
Thurston and Poincar\'{e} conjectures, related to spacetime topological
properties, Ricci flow running of physical constants and fields etc), see
Refs. \cite{caozhu,cao,kleiner,rbook}, for reviews of mathematical results,
and \cite{vrf,vv1,vnhrf2}, for some applications in physics. Nevertheless, a
number of important problems in geometry and physics are considered in the
framework of classical and quantum field theories with constraints (for
instance, the Lagrange and Hamilton mechanics, Dirac quantization of
constrained systems, gauge theories with broken symmetries etc). With
respect to the Ricci flow theory, to impose constraints on evolution
equations is to extend the research programs on manifolds enabled with
nonholonomic distributions, i.e. to study flows of fundamental geometric
structures on nonholonomic manifolds.\footnote{%
on applications of the geometry of nonholonomic manifolds (which are
manifolds enabled with nonholonomic distributions/structures) to standard
theories of physics, see Refs. \cite{vsgg,vrfg};  historical reviews of
results and applications to Finsler, Lagrange, Hamilton geometry and
generalizations are considered in Refs. \cite{bejf,ma,mhohs}}

Imposing certain noncommutative conditions on physical variables and
coordinates in an evolution theory, we transfer the constructions and
methods into the field of noncommutative geometric analysis on nonholonomic
manifolds. This also leads naturally to various problems related to
noncommutative generalizations of the Ricci flow theory and possible
applications in modern physics.

In this work, \ we follow the approach to noncommutative geometry when the
spectral action paradigm \cite{connes1}, with spectral triples and Dirac
operators, gives us a very elegant formulation of the standard model in
physics. We cite here some series of works on noncommutative Connes--Lott
approach to the standard model and further developments and alternative
approaches \cite{conlot,chcon1,vancea,seiberg,
castro1,moffat,jurco,nishino,caciatori,aschieri,szabo,zerjak,chcon2}, see
also a recent review of results in Ref. \cite{chamconmarc} and monographs \cite{landi,madore,fgbv}.

Following the spectral action paradigm, all details of the standard models
of particle interactions and gravity can be ''extracted'' from a
noncommutative geometry generated by a spectral triple $(\mathcal{A},%
\mathcal{H},\mathcal{D})$ by postulating the action%
\begin{equation}
Tr~~f(\mathcal{D}^{2}/\Lambda ^{2})+<\Psi |\mathcal{D}|\Psi >,  \label{tract}
\end{equation}%
where ''spectral'' is in the sense that the action depends only on the
spectrum of the Dirac operator $\mathcal{D}$ on a certain noncommutative
space defined by a noncommutative associative algebra $\mathcal{A}=C^{\infty
}(V)\otimes ~^{P}\mathcal{A}.$ \footnote{%
we use a different system of notations than that in \cite{chcon2,chamconmarc}
because we have to adapt the formalism to denotations and methods formally
elaborated in noncommutative geometry, Ricci flow theory and nonholonomic
geometries} In formula (\ref{tract}), $Tr$ is the trace in operator algebra
and $\Psi $ is a spinor, all defined for a Hilbert space $\mathcal{H}$, $%
\Lambda $ is a cutoff scale and $~f$ \ is a positive function. For a number
of physical applications, $~^{P}\mathcal{A}$ is a finite dimensional algebra
and $C^{\infty }(V)$ is the algebra of complex valued and smooth functions
over a ''space'' $V,$ a topological manifold, which for different purposes
can be enabled with various necessary geometric structures. The spectral
geometry of $\mathcal{A}$ is given by the product rule $\mathcal{H}%
=L^{2}(V,S)\otimes ~^{P}\mathcal{H},$ where $L^{2}(V,S)$ is the Hilbert
space of $L^{2}$ spinors and $~^{P}\mathcal{H}$ is the Hilbert space of
quarks and leptons fixing the choice of the Dirac operator $~^{P}D$ and the
action $~^{P}\mathcal{A}$ for fundamental particles. Usually, the Dirac
operator from (\ref{tract}) is parametrized $\mathcal{D}=~^{V}D\otimes
1+\gamma _{5}\otimes ~^{P}D,$ where $~^{V}D$ is the Dirac operator of the
Levi--Civita spin connection on $V.$\footnote{%
in this work, we shall use left ''up'' and ''low'' abstract labels which
should not be considered as tensor or spinor indices written in the right
side of symbols for geometrical objects}

In order to construct exact solutions with noncommutative symmetries and
noncommutative gauge models of gravity \cite{vncg,vncgr} and include dilaton
fields \cite{chcon3}, one has to use instead of $~^{V}D$ certain generalized
types of Dirac operators defined by nonholonomic and/or conformal
deformations of the 'primary' Levi--Civita spin connection. In a more
general context, the problem of constructing well defined geometrically and
physically motivated nonholonomic Dirac operators is related to the problem
of definition of spinors and Dirac operators on Finsler--Lagrange spaces and
generalizations \cite{vspinor,vhspinor,vcliffordalg}; for a review of
results see \cite{vstav,vvicol} and Part III in the collection of works \cite%
{vsgg}, containing a series of papers and references on noncommutative
generalizations of Riemann--Finsler and Lagrange--Hamilton geometries,
nonholonomic Clifford structures and Dirac operators and applications to
standard models of physics and string theory.

The aims and results of this article are outlined as follow: \ Section \ref%
{s2} is devoted to an introduction to the geometry of nonholonomic
(commutative) Riemannian manifolds and definition of spinors on such
manifolds. Nonholonomic Dirac operators and related spectral triples are
considered in Section \ref{s3}. We show how to compute distances in such
nonholonomic spinor spaces. The main purpose of this paper (see Section \ref%
{s4}) is to prove that the Perelman's functionals \cite{gper1} and their
generalizations for nonholonomic Ricci flows in \cite{vnhrf2} can be
extracted from corresponding spectral functionals defining flows of a
generalized Dirac operator and their scalings.\ Finally, in Section \ref{s5}
we discuss and conclude the results of the paper. Certain important
component formulas are outlined in Appendix.

\section{Nonholonomic Manifolds and Spinor Structures}

\label{s2}The concept of nonholonomic manifold was introduced independently
by G. Vr\v{a}nceanu \cite{vr1} and Z. Horak \cite{hor} for geometric
interpretations of nonholonomic mechanical systems (see modern approaches
and historical remarks in Refs. \cite{bejf,vsgg,grozleit}). They called a
pair $(\mathbf{V},\mathcal{N})$, where $\mathbf{V}$ is a manifold and $%
\mathcal{N}$ is a nonintegrable distribution on $\mathbf{V}$, to be a
nonholonomic manifold and considered new classes of linear connections
(which were different from the Levi--Civita connection). Three well known
classes of nonholonomic manifolds, when the nonholonomic distribution
defines a nonlinear connection (N--connection) structure, are defined by
Finsler spaces \cite{cartan,bejancu,bcs} and their generalizations as
Lagrange and Hamilton spaces and higher order models \cite{kern,ma,mhohs,bm}
(usually such geometries are modelled on a tangent bundle). More recent
examples, related to exact off--diagonal solutions and nonholonomic frames
in Einstein/ string/ gauge/ quantum/ noncommutative gravity and nonholonomic
Fedosov manifolds \cite{vncg,esv,vsgg,vpla} also emphasize nonholonomic
geometric structures but on generic spacetime manifolds and generalizations.

The aim of this section is to formulate the geometry of nonholonomic
Clifford structures in a form adapted to generalizations for noncommutative
spaces.

\subsection{Nonholonomic distributions and nonlinear connections}

We consider a $(n+m)$--dimensional manifold $\mathbf{V,}$ with $n\geq 2$ and
$m\geq 1$ (for simplicity, in this work, $\mathbf{V}$ is a real \ smooth
Riemannian space). The local coordinates on $\mathbf{V}$ are denoted $%
u=(x,y),$ or $u^{\alpha }=\left( x^{i},y^{a}\right) ,$ where the
''horizontal'' (h) indices run the values $i,j,k,\ldots =1,2,\ldots ,n$ and
the ''vertical'' (v) indices run the values $a,b,c,\ldots =n+1,n+2,\ldots
,n+m.$ We parameterize a metric structure on $\mathbf{V}$ in the form

\begin{equation}
\mathbf{\ g}=\underline{g}_{\alpha \beta }\left( u\right) du^{\alpha
}\otimes du^{\beta }  \label{metr}
\end{equation}%
defined with respect to a local coordinate basis $du^{\alpha }=\left(
dx^{i},dy^{a}\right) $ by coefficients%
\begin{equation}
\underline{g}_{\alpha \beta }=\left[
\begin{array}{cc}
g_{ij}\left( u\right) +\underline{N}_{i}^{a}\left( u\right) \underline{N}%
_{j}^{b}\left( u\right) h_{ab}\left( u\right) & \underline{N}_{j}^{e}\left(
u\right) h_{ae}\left( u\right) \\
\underline{N}_{i}^{e}\left( u\right) h_{be}\left( u\right) & h_{ab}\left(
u\right)%
\end{array}%
\right] .  \label{ansatz}
\end{equation}

We denote by $\pi ^{\top }:T\mathbf{V}\rightarrow TV$ the differential of a
map $\mathbf{V}\rightarrow V$ \ defined by fiber preserving morphisms of the
tangent bundles $T\mathbf{V}$ and $TV,$ where $V$ is a $n$--dimensional
manifold of necessary smooth class.\footnote{%
For simplicity, we restrict our considerations for a subclass of
nonholonomic distributions $\mathcal{N}$ modelling certain fibered structures $%
\mathbf{V}\rightarrow V$ \ with constant rank $\pi .$ In such a case, \ the
map $\pi ^{\top }$ is similar to that for a vector bundle with total space $%
\mathbf{V}$ and base $V.$ In general, we can use any map $\pi ^{\top }$ for
which the kernel defines a corresponding vertical subspace as a nonholonomic
distribution.} The kernel of $\pi ^{\top }$ is just the vertical subspace $v%
\mathbf{V}$ with a related inclusion mapping $i:v\mathbf{V}\rightarrow T%
\mathbf{V}.$

\begin{definition}
A nonlinear connection (N--connection) $\mathbf{N}$ on a manifold $\mathbf{V}
$ is defined by the splitting on the left of an exact sequence
\begin{equation*}
0\rightarrow v\mathbf{V}\overset{i}{\rightarrow} T\mathbf{V}\rightarrow T%
\mathbf{V}/v\mathbf{V}\rightarrow 0,
\end{equation*}%
i. e. by a morphism of submanifolds $\mathbf{N:\ \ }T\mathbf{V}\rightarrow v%
\mathbf{V}$ such that $\mathbf{N\circ i}$ is the unity in $v\mathbf{V}.$
\end{definition}

Locally, a N--connection is defined by its coefficients $N_{i}^{a}(u),$%
\begin{equation}
\mathbf{N}=N_{i}^{a}(u)dx^{i}\otimes \frac{\partial }{\partial y^{a}}.
\label{coeffnc}
\end{equation}%
Globalizing the local splitting, one prove that any N--connection is defined
by a Whitney sum of conventional horizontal (h) subspace, $\left( h\mathbf{V}%
\right) ,$ and vertical (v) subspace, $\left( v\mathbf{V}\right) ,$
\begin{equation}
T\mathbf{V}=h\mathbf{V}\oplus v\mathbf{V}.  \label{whitney}
\end{equation}

The sum (\ref{whitney}) states on $T\mathbf{V}$ a nonholonomic distribution
of horizontal and vertical subspaces. The well known class of linear
connections consists on a particular subclass with the coefficients being
linear on $y^{a},$ i.e. $N_{i}^{a}(u)=\Gamma _{bj}^{a}(x)y^{b}.$

For simplicity, we shall work with a particular class of nonholonomic
manifolds:

\begin{definition}
\label{defanhm} A manifold $\mathbf{V}$ is N--anholonomic if its tangent
space $T\mathbf{V}$ is enabled with a N--connection structure (\ref{whitney}%
).
\end{definition}

There are also two important examples of N--anholonomic manifolds modelled on bundle spaces:

\begin{example}
\label{enhvtb} a) A vector bundle $\mathbf{E}=(E,\pi ,M,\mathbf{N})$ defined
by a surjective projection $\pi :E\rightarrow M,$ with $M$ being the base
manifold, $\dim M=n,$ and $E$ being the total space, $\dim E=n+m,$ and
provided with a N--connection splitting (\ref{whitney}) is a N--anholonomic
vector bundle.

b) A particular case is that of N--anholonomic tangent bundle $\mathbf{TM}%
=(TM,\pi ,M,\mathbf{N}),$ with dimensions $n=m.$\footnote{%
For the tangent bundle $TM,$ we can consider that both type of indices run
the same values.}

Following our unified geometric formalism, we can write that in the above
mentioned examples $\mathbf{V=E,}$ or $\mathbf{V=TM.}$
\end{example}

A N--anholonomic manifold is characterized by its curvature:

\begin{definition}
The N--connection curvature is defined as the Neijenhuis tensor, %
$\mathbf{\Omega }(\mathbf{X,Y})\doteqdot \lbrack v\mathbf{X},v\mathbf{Y}]+\ v[%
\mathbf{X,Y}]- v[v\mathbf{X},\mathbf{Y}]-v[\mathbf{X},v\mathbf{Y}].$ 
\end{definition}

In local form, we have for $\mathbf{\Omega }=\frac{1}{2}\Omega _{ij}^{a}\
d^{i}\wedge d^{j}\otimes \partial _{a}$ the coefficients%
\begin{equation}
\Omega _{ij}^{a}=\frac{\partial N_{i}^{a}}{\partial x^{j}}-\frac{\partial
N_{j}^{a}}{\partial x^{i}}+N_{i}^{b}\frac{\partial N_{j}^{a}}{\partial y^{b}}%
-N_{j}^{b}\frac{\partial N_{i}^{a}}{\partial y^{b}}.  \label{ncurv}
\end{equation}

Performing a frame (vielbein) transform $\mathbf{e}_{\alpha }=\mathbf{e}%
_{\alpha }^{\ \underline{\alpha }}\partial _{\underline{\alpha }}$ and $%
\mathbf{e}_{\ }^{\beta }=\mathbf{e}_{\ \underline{\beta }}^{\beta }du^{%
\underline{\beta }},$ where we underline the local coordinate indices, when $%
\partial _{\underline{\alpha }}=\partial /\partial u^{\underline{\alpha }%
}=(\partial _{\underline{i}}=\partial /\partial x^{\underline{i}},\partial
/\partial y^{\underline{a}}),$ with coefficients
\begin{equation*}
\mathbf{e}_{\alpha }^{\ \underline{\alpha }}(u)=\left[
\begin{array}{cc}
e_{i}^{\ \underline{i}}(u) & N_{i}^{b}(u)e_{b}^{\ \underline{a}}(u) \\
0 & e_{a}^{\ \underline{a}}(u)%
\end{array}%
\right] ,~\mathbf{e}_{\ \underline{\beta }}^{\beta }(u)=\left[
\begin{array}{cc}
e_{\ \underline{i}}^{i\ }(u) & -N_{k}^{b}(u)e_{\ \underline{i}}^{k\ }(u) \\
0 & e_{\ \underline{a}}^{a\ }(u)%
\end{array}%
\right] ,
\end{equation*}%
we transform the metric (\ref{whitney}) into a distinguished metric
(d--metric)
\begin{equation}
\mathbf{g}=~^{h}g+~^{v}h=\ g_{ij}(x,y)\ e^{i}\otimes e^{j}+\ h_{ab}(x,y)\
\mathbf{e}^{a}\otimes \mathbf{e}^{b},  \label{m1}
\end{equation}%
for an associated, to a N--connection, frame (vielbein) structure $\mathbf{e}%
_{\nu }=(\mathbf{e}_{i},e_{a}),$ where
\begin{equation}
\mathbf{e}_{i}=\frac{\partial }{\partial x^{i}}-N_{i}^{a}(u)\frac{\partial }{%
\partial y^{a}}\mbox{ and
}e_{a}=\frac{\partial }{\partial y^{a}},  \label{dder}
\end{equation}%
and the dual frame (coframe) structure $\mathbf{e}^{\mu }=(e^{i},\mathbf{e}%
^{a}),$ where
\begin{equation}
e^{i}=dx^{i}\mbox{ and }\mathbf{e}^{a}=dy^{a}+N_{i}^{a}(u)dx^{i}.
\label{ddif}
\end{equation}

A vector field $\mathbf{X}\in T\mathbf{V}$ \ can be expressed
\begin{equation*}
\mathbf{X}=(hX,\ vX),\mbox{ \ or \ }\mathbf{X}=X^{\alpha }\mathbf{e}_{\alpha
}=X^{i}\mathbf{e}_{i}+X^{a}e_{a},
\end{equation*}%
where $hX=X^{i}\mathbf{e}_{i}$ and $vX=X^{a}e_{a}$ state, respectively, the
adapted to the N--connection structure horizontal (h) and vertical (v)
components of the vector. In brief, $\mathbf{X}$ is called a distinguished
vector (in brief, d--vector). \footnote{%
The vielbeins (\ref{dder}) and (\ref{ddif}) are called respectively
N--adapted frames and coframes. In order to preserve a relation with some
previous our notations \cite{vncg,vsgg}, we emphasize that $\mathbf{e}_{\nu
}=(\mathbf{e}_{i},e_{a})$ and $\mathbf{e}^{\mu }=(e^{i},\mathbf{e}^{a})$ are
correspondingly the former ''N--elongated'' partial derivatives $\delta
_{\nu }=\delta /\partial u^{\nu }=(\delta _{i},\partial _{a})$ and
''N--elongated'' differentials $\delta ^{\mu }=\delta u^{\mu }=(d^{i},\delta
^{a}).$ They define certain ``N--elongated'' differential operators which
are more convenient for tensor and integral calculations on such
nonholonomic manifolds.}

\begin{convention}
The geometric objects on $\mathbf{V}$ like tensors, spinors, connections etc
are called respectively d--tensors, d--spinors, d--connections etc if they
are adapted to the N--connection splitting (\ref{whitney}).
\end{convention}

The vielbeins (\ref{ddif}) satisfy the nonholonomy relations
\begin{equation}
\lbrack \mathbf{e}_{\alpha },\mathbf{e}_{\beta }]=\mathbf{e}_{\alpha }%
\mathbf{e}_{\beta }-\mathbf{e}_{\beta }\mathbf{e}_{\alpha }=W_{\alpha \beta
}^{\gamma }\mathbf{e}_{\gamma }  \label{anhrel}
\end{equation}%
with (antisymmetric) nontrivial anholonomy coefficients $W_{ia}^{b}=\partial
_{a}N_{i}^{b}$ and $W_{ji}^{a}=\Omega _{ij}^{a}.$

On any commutative nonholonomic manifold $\mathbf{V,}$ we can work
equivalently with an infinite number of d--connections $~^{N}\mathbf{D=}~^{N}%
\mathbf{\Gamma (g),}$ which are d--metric compatible, $^{N}\mathbf{D~g=0,}$
and uniquely defined by a given metric $\mathbf{g.}$ Writing the deformation
relation
\begin{equation}
~^{N}\mathbf{\Gamma (g)=}~\ \Gamma (\mathbf{g})+~^{N}\mathbf{Z}(\mathbf{g}),
\label{ddcon}
\end{equation}%
where the deformation tensor $~^{N}\mathbf{Z}(\mathbf{g})$ is also uniquely
defined by $\mathbf{g,}$ we can transform any geometric construction for the
Levi--Civita connection $~\Gamma (\mathbf{g})$ equivalently into
corresponding constructions for the d--connection $~^{N}\mathbf{\Gamma (g),}$
and inversely, see details in Refs. \cite{vnhrf2,vrfg,vncg,vsgg,vijmmp}.
From a formal point of view, there is a nontrivial torsion $~^{N}\mathbf{T}(%
\mathbf{g}),$ computed following formula (\ref{tors1}), with coefficients (%
\ref{dtors}), all considered for $~^{N}\mathbf{\Gamma (g).}$ This torsion is
induced nonholonomically as an effective one (by anholonomy coefficients,
see (\ref{anhrel}) and (\ref{ncurv})) and constructed only from the
coefficients of metric $\mathbf{g}.$ Such a torsion is completely deferent
from that in string, or Einstein--Cartan, theory when the torsion tensor is
an additional (to metric) field defined by an antisymmetric $H$--field, or
spinning matter, see Ref. \cite{string1}.

The main conclusion of this section is that working with nonholonomic
distributions on formal Riemannian manifolds we can model, by anholonomic
frames and adapted geometric objects, various types of geometric structures
and physical theories with generic off--diagonal gravitational interactions,
constrained Lagrange--Hamilton dynamics, Finsler and Lagrange spaces etc.

\subsection{N--anholonomic spin structures}

The spinor bundle on a manifold $M,\ dimM=n,$ is constructed on the tangent
bundle $TM$ by substituting the group $SO(n)$ by its universal covering $%
Spin(n).$ If a horizontal quadratic form $\ ^{h}g_{ij}(x,y)$ is defined on $%
T_{x}h\mathbf{V}$ we can consider h--spinor spaces in every point $x\in h%
\mathbf{V}$ with fixed $y^{a}.$ The constructions can be completed on $T%
\mathbf{V}$ by using the d--metric $\ \mathbf{g}$ (\ref{m1}). In this case,
the group $SO(n+m)$ is not only substituted by $Spin(n+m)$ but with respect
to N--adapted frames (\ref{dder}) and (\ref{ddif}) there are emphasized
decompositions to $Spin(n)\oplus Spin(m).$\footnote{%
It should be noted here that spin bundles may not exist for general
holonomic or nonholonomic manifolds. For simplicity, we do not provide such
topological considerations in this paper, see Ref. \cite{vgonz} on nontrivial topological configurations with nonholonmic manifolds. We state that we shall work only
with N--anholonomic manifolds for which certain spinor structures can be
defined both for the h- and v--splitting; the existence of a well defined
decomposition $Spin(n)\oplus Spin(m)$ follows from N--connection splitting (%
\ref{whitney}).}

\subsubsection{Clifford N--adapted modules (d--modules)}

A Clifford d--algebra is a\ $\ \wedge V^{n+m}$ algebra endowed with a
product
\begin{equation*}
\mathbf{u}\mathbf{v}+\mathbf{v}\mathbf{u}=2\mathbf{g}(\mathbf{u},\mathbf{v}%
)\ \mathbb{I}
\end{equation*}%
distinguished into h--, v--products
\begin{eqnarray*}
~^{h}u~^{h}v+~^{h}v~^{h}u &=&2~^{h}g(u,v)\ ~^{h}\mathbb{I}\ , \\
~^{v}u\ \ ~^{v}v+~^{v}v\ \ ~^{v}u &=&2\ \ ~^{v}h(~^{v}u,\ \ ~^{v}v)~^{v}%
\mathbb{I}\ ,\
\end{eqnarray*}%
for any $\mathbf{u}=(~^{h}u,\ ~^{v}u),\ \mathbf{v}=(~^{h}v,\ ~^{v}v)\in
V^{n+m},$ where $\mathbb{I},$ $\ ~^{h}\mathbb{I}\ $\ and $^{v}\mathbb{I}\ \ $%
\ are unity matrices of corresponding dimensions $(n+m)\times (n+m),$ or $%
n\times n$ and $m\times m.$

A metric $^{h}g$ on $h\mathbf{V}$ is defined by sections of the tangent
space $T~h\mathbf{V}$ provided with a bilinear symmetric form on continuous
sections $\Gamma (T~h\mathbf{V}).$\footnote{%
for simplicity, we shall consider only ''horizontal'' geometric
constructions if they are similar to ''vertical'' ones} This allows us to
define Clifford h--algebras $~^{h}\mathcal{C}l(T_{x}h\mathbf{V}),$ in any
point $x\in T~h\mathbf{V},$
\begin{equation*}
\gamma _{i}\gamma _{j}+\gamma _{j}\gamma _{i}=2\ g_{ij}~^{h}\mathbb{I}\ .
\end{equation*}%
For any point $x\in h\mathbf{V}$ and fixed $y=y_{0},$ there exists a standard
complexification, $T_{x}h\mathbf{V}^{\mathbb{C}}\doteq T_{x}h\mathbf{V}%
+iT_{x}h\mathbf{V},$ which can be used for definition of the 'involution'
operator on sections of $T_{x}h\mathbf{V}^{\mathbb{C}},$
\begin{equation*}
~^{h}\sigma _{1}~^{h}\sigma _{2}(x)\doteq ~^{h}\sigma _{2}(x)~^{h}\sigma
_{1}(x),\ ~^{h}\sigma ^{\ast }(x)\doteq ~^{h}\sigma (x)^{\ast },\forall x\in
h\mathbf{V},
\end{equation*}%
where ''*'' denotes the involution on every $~^{h}\mathcal{C}l(T_{x}h\mathbf{%
V}).$

\begin{definition}
A Clifford d--space on a nonholonomic manifold $~\mathbf{V}$ enabled with a
d--metric $\ \mathbf{g}(x,y)$ (\ref{m1}) and a N--connection $\ \mathbf{N}$ (%
\ref{coeffnc}) is defined as a Clifford bundle $~\mathcal{C}l(\mathbf{V}%
)=~^{h}\mathcal{C}l(h\mathbf{V})\oplus ~^{v}\mathcal{C}l(v\mathbf{V}),$ for
the $\ $Clifford h--space $~^{h}\mathcal{C}l(h\mathbf{V})\doteq ~^{h}%
\mathcal{C}l(T^{\ast }h\mathbf{V})$ and Clifford v--space $^{v}\mathcal{C}l(v%
\mathbf{V})\doteq ~^{v}\mathcal{C}l(T^{\ast }v\mathbf{V}).$
\end{definition}

For a fixed N--connection structure, a Clifford N--anholonomic bundle on $%
\mathbf{V}$ is defined $\ ~^{N}\mathcal{C}l(\mathbf{V})\doteq ~^{N}\mathcal{C%
}l(T^{\ast }\mathbf{V}).$ Let us consider a complex vector bundle $~^{E}\pi :\ E\rightarrow \mathbf{V}$
on an N--anholonomic space $\mathbf{V}$ when the N--connection structure is
given for the base manifold. The Clifford d--module of a vector bundle ${E}$
is defined by the $C(\mathbf{V})$--module $\Gamma ({E})$ of continuous
sections in ${E},$ 
$c:\ \Gamma (~^{N}\mathcal{C}l(\mathbf{V}))\rightarrow End(\Gamma ({E})).$

In general, a vector bundle on a N--anholonomic manifold may be not adapted
to the N--connection structure on base space.

\subsubsection{h--spinors,  v--spinors and d--spinors}

Let us consider a vector space $V^{n}$ provided with Clifford structure. We
denote such a space $^{h}V^{n}$ in order to emphasize that its tangent space
is provided with a quadratic form $\ ^{h}g.$ We also write $~^{h}\mathcal{C}%
l(V^{n})\equiv \mathcal{C}l(~^{h}V^{n})$ and use subgroup $SO(\
^{h}V^{n})\subset O(\ ^{h}V^{n}).$

\begin{definition}
The space of complex h--spins is defined by the subgroup $\ $%
\begin{equation*}
^{h}Spin^{c}(n)\equiv Spin^{c}(\ ^{h}V^{n})\equiv \
^{h}Spin^{c}(V^{n})\subset \mathcal{C}l(\ ^{h}V^{n}),
\end{equation*}
determined by the products of pairs of vectors $w\in \ ^{h}V^{\mathbb{C}}$
when $w\doteq \lambda u$ where $\lambda $ is a complex number of module 1
and $u$ is of unity length in $\ ^{h}V^{n}.$
\end{definition}

Similar constructions can be performed for the v--subspace $~^{v}V^{m},$
which allows us to define similarly the group of real v--spins.

A usual spinor is a section of a vector bundle $S$ on a manifold $M$ when an
irreducible representation of the group $Spin(M)\doteq Spin(T_{x}^{\ast }M)$
is defined on the typical fiber. The set of sections $\Gamma (S)$ is a
irreducible Clifford module. If the base manifold is of type $h\mathbf{V},$
or is a general N--anholonomic manifold $\mathbf{V},$ we have to define
spinors on such spaces in a form adapted to the respective N--connection
structure.

\begin{definition}
A h--spinor bundle $\ ^{h}S$ on a h--space $h\mathbf{V}$ is a complex vector
bundle with both defined action of the h--spin group $\ ^{h}Spin(V^{n})$ on
the typical fiber and an irreducible representation of the group $\ ^{h}Spin(%
\mathbf{V})\equiv Spin(h\mathbf{V})\doteq Spin(T_{x}^{\ast }h\mathbf{V}).$
The set of sections $\Gamma (\ ^{h}S)$ defines an irreducible Clifford h--module.
\end{definition}

The concept of ''d--spinors'' has been introduced for the spaces provided
with N--connection structure \cite{vspinor,vhspinor}:

\begin{definition}
\label{ddsp} A distinguished spinor (d--spinor) bundle $\mathbf{S}\doteq (\
\ ^{h}S,\ \ \ ^{v}S)$ on a N--anho\-lo\-nom\-ic manifold $\mathbf{V},$ $\ dim%
\mathbf{V}=n+m,$ is a complex vector bundle with a defined action of the
spin d--group $Spin\ \mathbf{V}\doteq Spin(V^{n})\oplus Spin(V^{m})$ with
the splitting adapted to the N--connection structure which results in an
irreducible representation $Spin(\mathbf{V})\doteq Spin(T^{\ast }\mathbf{V}%
). $ The set of sections $\Gamma (\mathbf{S})=\Gamma \ (\ \ ^{h}{S})\oplus
\Gamma (\ \ \ ^{v}{S})$ is an irreducible Clifford d--module.
\end{definition}

If we study algebras through theirs representations, we also have to
consider various algebras related by the Morita equivalence.\footnote{%
The Morita equivalence can be analyzed by applying in N--adapted form, both
on the base and fiber spaces, the consequences of the Plymen's theorem (see
Theorem 9.3 in Ref. \cite{fgbv}; in this work, we omit details of such
considerations).}

The possibility to distinguish the $Spin(n)$ (or, correspondingly $Spin(h%
\mathbf{V}),$\ $Spin(V^{n})\oplus Spin(V^{m}))$ allows us to define an
antilinear bijection $\ ^{h}J:\ \ ^{h}S\ \rightarrow \ \ ^{h}S$ (or $\ \
^{v}J:\ \ \ ^{v}S\ \rightarrow \ \ ^{v}\ S$ and $\mathbf{J}:\ \mathbf{S}\
\rightarrow \ \mathbf{S})$ with properties of type:
\begin{eqnarray}
~^{h}J(~^{h}a\psi ) &=&~^{h}\chi (~^{h}a)~^{h}J~^{h}\psi ,\mbox{ for }%
~^{h}a\in \Gamma ^{\infty }(\mathcal{C}l(h\mathbf{V}));  \notag \\
(~^{h}J~^{h}\phi |~^{h}J~^{h}\psi ) &=&(~^{h}\psi |~^{h}\phi )\mbox{ for }%
~^{h}\phi ,~^{h}\psi \in ~^{h}S.  \label{jeq}
\end{eqnarray}

The considerations presented in this Section consists the proof of:\

\begin{theorem}
\label{mr1}Any d--metric and N--con\-nec\-ti\-on structure defines naturally
the fundamental geometric objects and structures (such as the Clifford
h--module, v--module and Clifford d--modules,or the h--spin, v--spin
structures and d--spinors) for the corresponding nonholonomic spin manifold
and/or N--anholo\-nom\-ic spinor (d--spinor) manifold.
\end{theorem}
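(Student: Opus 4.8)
The plan is to read the statement as a synthesis theorem, so that its proof amounts to verifying that each construction introduced in this section is (i) well defined and (ii) compatible with --- indeed determined by --- the N--connection splitting $T\mathbf{V}=h\mathbf{V}\oplus v\mathbf{V}$ of (\ref{whitney}). First I would fix the data: a d--metric $\mathbf{g}={}^{h}g+{}^{v}h$ of the form (\ref{m1}) together with the N--connection coefficients $N_i^a$ of (\ref{coeffnc}). The frame transform $\mathbf{e}_\alpha^{\ \underline{\alpha}}$ displayed just above the d--metric shows that $\mathbf{g}$ becomes block diagonal in the N--adapted (co)frames (\ref{dder}) and (\ref{ddif}); this is the structural fact that forces every subsequent object to split along the h-- and v--distributions, so I would record it at the outset.

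Second, I would generate the Clifford structures. Restricting $\mathbf{g}$ to $h\mathbf{V}$ and $v\mathbf{V}$ yields the symmetric bilinear forms ${}^h g$ and ${}^v h$, and the defining anticommutation relations $\gamma_i\gamma_j+\gamma_j\gamma_i=2\,g_{ij}\,{}^h\mathbb{I}$, together with their v--analogue, produce the Clifford h--algebra ${}^h\mathcal{C}l(T_x h\mathbf{V})$ and the Clifford v--algebra at each point, the complexification and the involution $\ast$ being defined fibrewise. Assembling these fibres gives the Clifford d--space $\mathcal{C}l(\mathbf{V})={}^h\mathcal{C}l(h\mathbf{V})\oplus{}^v\mathcal{C}l(v\mathbf{V})$ and, for a fixed N--connection, the N--anholonomic Clifford bundle ${}^N\mathcal{C}l(\mathbf{V})$; the Clifford d--module is then $\Gamma(E)$ equipped with the module map $c$ of the corresponding definition.

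Third, I would promote the orthogonal structures to spin structures. With respect to the N--adapted frames the $SO(n+m)$ frame data reduces to $SO(n)\oplus SO(m)$ acting on $h\mathbf{V}$ and $v\mathbf{V}$ respectively; lifting through the universal covers yields $Spin(n)\oplus Spin(m)\subset Spin(n+m)$, and the irreducible representations on the typical fibres assemble into the h--spinor bundle ${}^h S$, the v--spinor bundle ${}^v S$, and the d--spinor bundle $\mathbf{S}=({}^h S,{}^v S)$ of Definition \ref{ddsp}, whose sections form the irreducible Clifford d--module $\Gamma(\mathbf{S})=\Gamma({}^h S)\oplus\Gamma({}^v S)$. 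Finally I would exhibit the antilinear bijection $\mathbf{J}$, built from its h-- and v--parts ${}^h J,{}^v J$, and verify the compatibility relations (\ref{jeq}), thereby accounting for every ``fundamental geometric object and structure'' named in the statement.

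The main obstacle is not a single calculation but the well--definedness and naturality of the $Spin(n)\oplus Spin(m)$ reduction: one must check that the N--adapted frame transforms take values in the reduced structure group, so that the splitting of the Clifford and spin data is frame--independent, and that the two irreducible pieces glue into a genuinely irreducible d--module rather than an unrelated direct sum. As the footnote to this section warns, all of this presupposes that the relevant spin bundles exist --- a topological hypothesis I would treat as given --- and, if uniqueness up to the usual ambiguities is wanted, it is precisely here that the Morita equivalence and Plymen's theorem, applied in N--adapted form on both base and fibre, would be invoked to control the remaining choices.
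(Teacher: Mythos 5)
Your proposal is correct and follows essentially the same route as the paper: the paper's ``proof'' of Theorem \ref{mr1} is precisely the sequence of constructions in Section \ref{s2} (N--adapted frames, Clifford h--/v--/d--algebras and d--modules, the $Spin(n)\oplus Spin(m)$ reduction induced by the splitting (\ref{whitney}), the d--spinor bundle of Definition \ref{ddsp}, and the bijection $\mathbf{J}$ with properties (\ref{jeq})), and you have assembled exactly these ingredients in the same order. Your added attention to the well--definedness of the reduced structure group and to the topological existence hypothesis matches what the paper relegates to its footnotes on spin bundle existence and on Plymen's theorem in N--adapted form.
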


We note that similar results were obtained in Refs. \cite{vspinor,vhspinor}
for the standard Finsler and Lagrange geometries and theirs higher order
generalizations. In a more restricted form, the idea of Theorem \ref{mr1}
can be found in Ref. \cite{vncg}, where the first models of noncommutative
Finsler geometry and related gravity were analyzed.

\section{Nonholonomic Dirac Operators\newline
and Spectral Triples}

\label{s3} The Dirac operator for a class of (non) commutative nonholonomic
spaces provided with d--metric structure was introduced in Ref. \cite{vncg}
following previous constructions for the Dirac equations on locally
anisotropic spaces (various variants of Finsler--Lagrange and
Cartan--Hamilton spaces and generalizations), see \cite%
{vspinor,vhspinor,vstav,vvicol} and Part III in \cite{vsgg}. In this
Section, we define nonholonomic Dirac operators for general N--anholonomic
manifolds.

\subsection{N--anholonomic Dirac operators}

The geometric constructions depend on the type of linear connections
considered for definition of such Dirac operators. They are metric
compatible and N--adapted if the canonical d--connection is used (similar
constructions can be performed for any deformation which results in a
metric compatible d--connection).

\subsubsection{Noholonomic vielbeins and spin d--connections}

Let us consider a Hilbert space of finite dimension. For a local dual
coordinate basis $e^{\underline{i}}\doteq dx^{\underline{i}}$ on $h\mathbf{V}%
,$\ we may respectively introduce certain classes of orthonormalized
vielbeins and the N--adapted vielbeins, $\ e^{\hat{\imath}}\doteq e_{\
\underline{i}}^{\hat{\imath}}(x,y)\ e^{\underline{i}}$ and $e^{i}\doteq e_{\
\underline{i}}^{i}(x,y)\ e^{\underline{i}},$when $g^{\underline{i}\underline{%
j}}\ e_{\ \underline{i}}^{\hat{\imath}}e_{\ \underline{j}}^{\hat{\jmath}%
}=\delta ^{\hat{\imath}\hat{\jmath}}$ and $g^{\underline{i}\underline{j}}\
e_{\ \underline{i}}^{i}e_{\ \underline{j}}^{j}=g^{ij}.$

We define the algebra of Dirac's gamma horizontal matrices (in brief, gamma
h--matrices defined by self--adjoint matrices $M_{k}(\mathbb{C})$ where $%
k=2^{n/2}$ is the dimension of the irreducible representation of $\mathcal{C}%
l(h\mathbf{V})$ from relation $\ \gamma ^{\hat{\imath}}\gamma ^{\hat{\jmath}%
}+\gamma ^{\hat{\jmath}}\gamma ^{\hat{\imath}}=2\delta ^{\hat{\imath}\hat{%
\jmath}}\ ^{h}\mathbb{I}.$ \ The action of $dx^{i}\in \mathcal{C}l(h\mathbf{V%
})$ on a spinor $\ \ ^{h}\psi \in \ ^{h}S$ is given by formulas
\begin{equation}
\ \ \ ^{h}c(dx^{\hat{\imath}})\doteq \gamma ^{\hat{\imath}}\mbox{ and }\ \ \
^{h}c(dx^{i})\ \ ^{h}\psi \doteq \gamma ^{i}\ \ ^{h}\psi \equiv e_{\ \hat{%
\imath}}^{i}\ \gamma ^{\hat{\imath}}\ \ ^{h}\psi .  \label{gamfibb}
\end{equation}

Similarly, we can define the algebra of Dirac's gamma vertical matrices
related to a typical fiber $F$ (in brief, gamma v--matrices defined by
self--adjoint matrices $M_{k}^{\prime }(\mathbb{C}),$ where $k^{\prime
}=2^{m/2}$ is the dimension of the irreducible representation of $\mathcal{C}%
l(F)$) from relation $\gamma ^{\hat{a}}\gamma ^{\hat{b}}+\gamma ^{\hat{b}%
}\gamma ^{\hat{a}}=2\delta ^{\hat{a}\hat{b}}\ ^{v}\mathbb{I}.$ The action of
$dy^{a}\in \mathcal{C}l(F)$ on a spinor $\ ^{v}\psi \in \ ^{v}S$ is
\begin{equation*}
\ ^{v}c(dy^{\hat{a}})\doteq \gamma ^{\hat{a}}\mbox{ and }\ ^{v}c(dy^{a})\
^{v}\psi \doteq \gamma ^{a}\ ^{v}\psi \equiv e_{\ \hat{a}}^{a}\ \gamma ^{%
\hat{a}}\ ^{v}\psi .
\end{equation*}

A more general gamma matrix calculus with distinguished gamma matrices (in
brief, gamma d--matrices\footnote{%
in some our previous works \cite{vspinor,vhspinor} we wrote $\sigma $
instead of $\gamma $}) can be elaborated for N--anholonomic manifolds $%
\mathbf{V}$ provided with d--metric structure $\mathbf{g}=~^{h}g\oplus
~^{v}h]$ and for d--spinors $\breve{\psi}\doteq (~^{h}\psi ,\ ~^{v}\psi )\in
\mathbf{S}\doteq (~^{h}S,\ ~^{v}S).$ In this case, we consider d--gamma
matrix relations $\ \gamma ^{\hat{\alpha}}\gamma ^{\hat{\beta}}+\gamma ^{%
\hat{\beta}}\gamma ^{\hat{\alpha}}=2\delta ^{\hat{\alpha}\hat{\beta}}\mathbb{%
\ I},$ with the action of $du^{\alpha }\in \mathcal{C}l(\mathbf{V})$ on a
d--spinor $\breve{\psi}\in \ \mathbf{S}$ resulting in distinguished
irreducible representations
\begin{equation}
\mathbf{c}(du^{\hat{\alpha}})\doteq \gamma ^{\hat{\alpha}}\mbox{ and }%
\mathbf{c}=(du^{\alpha })\ \breve{\psi}\doteq \gamma ^{\alpha }\ \breve{\psi}%
\equiv e_{\ \hat{\alpha}}^{\alpha }\ \gamma ^{\hat{\alpha}}\ \breve{\psi}
\label{gamfibd}
\end{equation}%
which allows us to write
 $\gamma ^{\alpha }(u)\gamma ^{\beta }(u)+\gamma ^{\beta }(u)\gamma ^{\alpha
}(u)=2g^{\alpha \beta }(u)\ \mathbb{I}.$

In the canonical representation, we have the irreducible form $\breve{\gamma}%
\doteq ~^{h}\gamma \oplus \ ~^{v}\gamma $ and $\breve{\psi}\doteq ~^{h}\psi
\oplus ~^{v}\psi ,$ for instance, by using block type of h-- and
v--matrices. We can also write such formulas as couples of gamma and/or h--
and v--spinor objects written in N--adapted form, $\gamma ^{\alpha }\doteq
(~^{h}\gamma ^{i},~^{v}\gamma ^{a})$ and $\breve{\psi}\doteq (~^{h}\psi ,\
~^{v}\psi ).$

The spin connection $~_{S}\nabla $ for Riemannian manifolds is induced by
the Levi--Civita connection $\ ~\Gamma ,$ $~_{S}\nabla \doteq d-\frac{1}{4}\
\Gamma _{\ jk}^{i}\gamma _{i}\gamma ^{j}\ dx^{k}.$ On N--anholonomic
manifolds, spin d--connection operators $~_{\mathbf{S}}\mathbf{\nabla }$ can
be similarly constructed from any metric compatible d--connection ${\mathbf{%
\Gamma }}_{\ \beta \mu }^{\alpha }$ using the N--adapted absolute
differential $\delta $ acting, for instance, on a scalar function $f(x,y)$
in the form
\begin{equation*}
\delta f=\left( \mathbf{e}_{\nu }f\right) \delta u^{\nu }=\left( \mathbf{e}%
_{i}f\right) dx^{i}+\left( e_{a}f\right) \delta y^{a},
\end{equation*}%
for $\delta u^{\nu }=\mathbf{e}^{\nu },$ see N--elongated operators (\ref%
{dder}) and (\ref{ddif}).

\begin{definition}
The canonical spin d--connection is defined by the canonical d--connection,
\begin{equation}
~_{\mathbf{S}}\widehat{\nabla }\doteq \delta -\frac{1}{4}\ \widehat{\mathbf{%
\Gamma }}_{\ \beta \mu }^{\alpha }\gamma _{\alpha }\gamma ^{\beta }\delta
u^{\mu },  \label{csdc}
\end{equation}%
where the N--adapted coefficients $\widehat{\mathbf{\Gamma }}_{\ \beta \mu
}^{\alpha }$ are given by formulas (\ref{candcon}).
\end{definition}

We note that the canonical spin d--connection $~_{\mathbf{S}}\widehat{\nabla
}$ is metric compatible and contains nontrivial d--torsion coefficients
induced by the N--anholonomy relations (see formulas (\ref{dtors}) proved
for arbitrary d--connection).

\subsubsection{Dirac d--operators}

We consider a vector bundle $\mathbf{E}$ on a N--anholonomic manifold $%
\mathbf{V}$ (with two compatible N--connections defined as h-- and
v--splittings of $T\mathbf{E}$ and $T\mathbf{V}$)). A d--connection $%
\mathcal{D}:\ \Gamma ^{\infty }(\mathbf{E})\rightarrow \Gamma ^{\infty }(%
\mathbf{E})\otimes \Omega ^{1}(\mathbf{V})$ preserves by parallelism
splitting of the tangent total and base spaces and satisfy the Leibniz
condition $\mathcal{D}(f\sigma )=f(\mathcal{D}\sigma )+\delta f\otimes
\sigma ,$ for any $f\in C^{\infty }(\mathbf{V}),$ and $\sigma \in \Gamma
^{\infty }(\mathbf{E})$ and $\delta $ defining an N--adapted exterior
calculus by using N--elongated operators (\ref{dder}) and (\ref{ddif}) which
emphasize d--forms instead of usual forms on $\mathbf{V},$ with the
coefficients taking values in $\mathbf{E}.$

The metricity and Leibniz conditions for $\mathcal{D}$ are written
respectively
\begin{equation}
\mathbf{g}(\mathcal{D}\mathbf{X},\mathbf{Y})+\mathbf{g}(\mathbf{X},\mathcal{D%
}\mathbf{Y})=\delta \lbrack \mathbf{g}(\mathbf{X},\mathbf{Y})],  \label{mc1}
\end{equation}%
for any $\mathbf{X},\ \mathbf{Y}\in \chi (\mathbf{V}),$ and
\begin{equation}
\mathcal{D}(\sigma \beta )\doteq \mathcal{D}(\sigma )\beta +\sigma \mathcal{D%
}(\beta ),  \label{lc1}
\end{equation}%
for any $\sigma ,\beta \in \Gamma ^{\infty }(\mathbf{E}).$ For local
computations, we may define the corresponding coefficients of the geometric
d--objects and write
\begin{equation*}
\mathcal{D}\sigma _{\acute{\beta}}\doteq {\mathbf{\Gamma }}_{\ {\acute{\beta}%
}\mu }^{\acute{\alpha}}\ \sigma _{\acute{\alpha}}\otimes \delta u^{\mu }={%
\mathbf{\Gamma }}_{\ {\acute{\beta}}i}^{\acute{\alpha}}\ \sigma _{\acute{%
\alpha}}\otimes dx^{i}+{\mathbf{\Gamma }}_{\ {\acute{\beta}}a}^{\acute{\alpha%
}}\ \sigma _{\acute{\alpha}}\otimes \delta y^{a},
\end{equation*}%
where fiber ''acute'' indices are considered as spinor ones.

The respective actions of the Clifford d--algebra and Clifford h--algebra
can be transformed into maps $\Gamma ^{\infty }(\mathbf{S})\otimes \Gamma (%
\mathcal{C}l(\mathbf{V}))$ and $\Gamma ^{\infty }(~^{h}S)\otimes \Gamma (%
\mathcal{C}l(~^{h}V)$ to $\Gamma ^{\infty }(\mathbf{S})$ and, respectively, $%
\Gamma ^{\infty }(~^{h}S)$ by considering maps of type (\ref{gamfibb}) and (%
\ref{gamfibd})
\begin{equation*}
\widehat{\mathbf{c}}(\breve{\psi}\otimes \mathbf{a})\doteq \mathbf{c}(%
\mathbf{a})\breve{\psi}\mbox{\ and\ }~^{h}\widehat{c}(~^{h}{\psi }\otimes
~^{h}{a})\doteq ~^{h}{c}(~^{h}{a})~^{h}{\psi }.
\end{equation*}

\begin{definition}
\label{dddo} The Dirac d--operator (Dirac h--operator, or v--operant) on a
spin N--anholonomic manifold $(\mathbf{V},\mathbf{S},J)$ (on a h--spin
manifold\newline
$(h\mathbf{V},~^{h}S,~^{h}J),$ or on a v--spin manifold $(v\mathbf{V}%
,~^{v}S,~^{v}J))$ is defined
\begin{eqnarray}
\mathbb{D} &\doteq &-i\ (\widehat{\mathbf{c}}\circ ~_{\mathbf{S}}\nabla )
\label{ddo} \\
&=&\left( \ ^{h}\mathbb{D}=-i\ (\ ^{h}\widehat{{c}}\circ \ _{\mathbf{S}%
}^{h}\nabla ),\ \ ^{v}\mathbb{D}=-i\ (\ ^{v}\widehat{{c}}\circ \ _{\mathbf{S}%
}^{v}\nabla )\right)  \notag
\end{eqnarray}%
Such N--adapted Dirac d--operators are called canonical and denoted $%
\widehat{\mathbb{D}}=(\ ^{h}\widehat{\mathbb{D}},\ ^{v}\widehat{\mathbb{D}}\
)$\ if they are defined for the canonical d--connection (\ref{candcon}) and
respective spin d--connection (\ref{csdc}).
\end{definition}

We formulate:

\begin{theorem}
\label{mr2} Let $(\mathbf{V},\mathbf{S},\mathbf{J})$ (\ $(h\mathbf{V},\
^{h}S,\ ^{h}J))$ be a spin N--anholonomic manifold ( h--spin space). There
is the canonical Dirac d--operator (Dirac h--operator) defined by the almost
Hermitian spin d--operator
\begin{equation*}
~_{\mathbf{S}}\widehat{\nabla }:\ \Gamma ^{\infty }(\mathbf{S})\rightarrow
\Gamma ^{\infty }(\mathbf{S})\otimes \Omega ^{1}(\mathbf{V})
\end{equation*}%
(spin h--operator $\ ~_{\mathbf{S}}^{h}\widehat{\nabla }:\ \Gamma ^{\infty
}(\ ^{h}{S})\rightarrow \Gamma ^{\infty }(\ ^{h}S)\otimes \Omega ^{1}(h%
\mathbf{V})\ )$ commuting with $\mathbf{J}$ ($\ ^{h}J$), see (\ref{jeq}),
and satisfying the conditions
\begin{equation*}
(~_{\mathbf{S}}\widehat{\nabla }\breve{\psi}\ |\ \breve{\phi})\ +(\breve{\psi%
}\ |~_{\mathbf{S}}\widehat{\nabla }\breve{\phi})\ =\delta (\breve{\psi}\ |\
\breve{\phi})\
\end{equation*}%
and
\begin{equation*}
~_{\mathbf{S}}\widehat{\nabla }(\mathbf{c}(\mathbf{a})\breve{\psi})\ =%
\mathbf{c}(\widehat{\mathbf{D}}\mathbf{a})\breve{\psi}+\mathbf{c}(\mathbf{a}%
)~_{\mathbf{S}}\widehat{\nabla }\breve{\psi}
\end{equation*}%
for $\mathbf{a}\in \mathcal{C}l(\mathbf{V})$ and $\breve{\psi}\in \Gamma
^{\infty }(\mathbf{S})$
\begin{equation*}
(\ (~_{\mathbf{S}}^{h}\widehat{\nabla }\ ^{h}{\psi }|\ \ ^{h}\phi )\ +(~^{h}{%
\psi }\ |\ ~_{\mathbf{S}}^{h}\widehat{\nabla }\ \ ^{h}\phi )\ =\ ^{h}\delta
(~^{h}{\psi }\ |\ \ ^{h}\phi )\
\end{equation*}%
and $\ ~_{\mathbf{S}}^{h}\widehat{\nabla }(~^{h}c(~^{h}a)~^{h}{\psi })\
=~^{h}c(\ \ ^{h}\widehat{D}~^{h}a)~^{h}{\psi }+~^{h}c(~^{h}a)~_{\mathbf{S}%
}^{h}\widehat{\nabla }~^{h}{\psi }$ for $~^{h}a\in \mathcal{C}l(h\mathbf{V})$
and $\breve{\psi}\in \Gamma ^{\infty }(\ ^{h}{S)}$ )\ determined by the
metricity (\ref{mc1}) and Leibnitz (\ref{lc1}) conditions.
\end{theorem}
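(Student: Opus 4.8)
The plan is to exhibit the operator explicitly as the canonical spin d--connection $~_{\mathbf{S}}\widehat{\nabla}$ of (\ref{csdc}), verify the two displayed compatibility conditions together with commutation with $\mathbf{J}$, and then obtain $\widehat{\mathbb{D}}$ by composing with the Clifford action as in Definition \ref{dddo}. Everything reduces to two facts already available: the canonical d--connection $\widehat{\mathbf{\Gamma}}_{\ \beta\mu}^{\alpha}$ is metric compatible, $\widehat{\mathbf{D}}\mathbf{g}=0$, and the gamma d--matrices (\ref{gamfibd}) satisfy $\gamma^{\alpha}\gamma^{\beta}+\gamma^{\beta}\gamma^{\alpha}=2g^{\alpha\beta}\mathbb{I}$. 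Because the N--connection splitting (\ref{whitney}) reduces the d--spinor data to its h-- and v--pieces, I would carry out the computation once for $(\mathbf{V},\mathbf{S},\mathbf{J})$ and read off the h--spin statement for $(h\mathbf{V},{}^{h}S,{}^{h}J)$ by restriction.

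First I would check that $~_{\mathbf{S}}\widehat{\nabla}$ maps into $\Gamma^{\infty}(\mathbf{S})\otimes\Omega^{1}(\mathbf{V})$: the operator $\delta$ supplies the one--form factor through $\delta u^{\mu}=\mathbf{e}^{\mu}$, while $\gamma_{\alpha}\gamma^{\beta}$ acts fibrewise on $\mathbf{S}$, so the output is an N--adapted $\mathbf{S}$--valued one--form. For the Hermitian (metricity) identity I would expand $(~_{\mathbf{S}}\widehat{\nabla}\breve{\psi}\mid\breve{\phi})+(\breve{\psi}\mid~_{\mathbf{S}}\widehat{\nabla}\breve{\phi})$ with (\ref{csdc}); the $\delta$ terms reassemble into $\delta(\breve{\psi}\mid\breve{\phi})$, and the two connection terms cancel once one uses that the generators $\gamma_{\alpha}\gamma^{\beta}$ are skew for the spinor pairing and that $g_{\alpha\beta}$ is symmetric --- this is precisely where metricity (\ref{mc1}) of $\widehat{\mathbf{\Gamma}}$ enters. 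For the Clifford--Leibniz identity I would differentiate the Clifford relation along $~_{\mathbf{S}}\widehat{\nabla}$ and use $\widehat{\mathbf{D}}\mathbf{g}=0$ to conclude that the gamma d--matrices are parallel; the derivation rule over Clifford multiplication is then the tensorial Leibniz condition (\ref{lc1}) transported to the spinor module. Commutation with $\mathbf{J}$ follows from the intertwining relations (\ref{jeq}) and the reality of $\widehat{\mathbf{\Gamma}}_{\ \beta\mu}^{\alpha}$, so that the antilinear $\mathbf{J}$ passes through $~_{\mathbf{S}}\widehat{\nabla}$.

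For the claim that (\ref{mc1}) and (\ref{lc1}) \emph{determine} the operator, I would argue as in the fundamental lemma of d--Riemannian geometry: two spin d--connections obeying both conditions differ by an $\mathrm{End}(\mathbf{S})$--valued one--form that commutes with every $\mathbf{c}(\mathbf{a})$ and is skew for the pairing; by irreducibility of the Clifford d--module (Definition \ref{ddsp}) such a form must be a purely imaginary scalar, and demanding commutation with $\mathbf{J}$ through (\ref{jeq}) removes this residual freedom, so the two connections coincide.

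The main obstacle I expect is bookkeeping in the nonholonomic frame rather than any conceptual difficulty. Unlike the Levi--Civita case, $\delta$ is built from the N--elongated operators (\ref{dder})--(\ref{ddif}), whose brackets carry the nontrivial anholonomy coefficients $W_{\alpha\beta}^{\gamma}$ of (\ref{anhrel}), and $\widehat{\mathbf{\Gamma}}$ has nonvanishing d--torsion (\ref{dtors}). I therefore have to confirm that these anholonomy and torsion contributions enter the two compatibility computations only through metric--compatible combinations, so that they cancel in the Hermitian identity and respect the derivation rule, and that the block decomposition $Spin(n)\oplus Spin(m)$ forced by (\ref{whitney}) is preserved by $\gamma_{\alpha}\gamma^{\beta}$. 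Once this is checked, the h-- and v--projections yield the Dirac h--operator and v--operator directly.
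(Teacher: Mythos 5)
Your proposal is correct in outline, but it follows a different route from the one the paper actually elaborates. What you describe --- writing down $~_{\mathbf{S}}\widehat{\nabla }$ explicitly via (\ref{csdc}), verifying the Hermitian identity from skew--adjointness of $\widehat{\mathbf{\Gamma }}_{\alpha \beta \mu }\gamma ^{\alpha }\gamma ^{\beta }$ forced by metricity, deriving the Clifford--Leibniz rule from parallelism of the gamma d--matrices, and pinning down uniqueness by an irreducibility--plus--$\mathbf{J}$ argument --- is precisely the ``first possibility'' the paper mentions only in passing (the N--adapted generalization of Theorem 9.8 of \cite{fgbv}), and your treatment of it is more complete than the paper's, particularly the uniqueness step, which the paper does not address at all. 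The proof the paper actually gives is the ``second possibility'': it never verifies the two displayed conditions directly, but instead starts from the already--established Levi--Civita spin and Dirac operators and transports them through the unique metric--determined distortion $\ \Gamma _{\ \alpha \beta }^{\gamma }=\widehat{\mathbf{\Gamma }}_{\ \alpha \beta }^{\gamma }+Z_{\ \alpha \beta }^{\gamma }$ of (\ref{cdeft}), obtaining splittings $\ _{S}\nabla =\ _{\mathbf{S}}\widehat{\nabla }+\ _{Z}\widehat{\nabla }$ and $\ _{S}\mathbb{D}=\mathbb{D}+\ _{Z}\mathbb{D}$ in which every term is determined by $\mathbf{g}$ alone. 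The deformation route buys existence and well--definedness almost for free and makes manifest the equivalence between the holonomic and nonholonomic Dirac operators (which is what Section \ref{s4} needs), but it inherits the stated properties from the Levi--Civita case rather than proving them; your direct route requires the frame bookkeeping you flag --- the anholonomy coefficients $W_{\alpha \beta }^{\gamma }$ of (\ref{anhrel}) and the d--torsion (\ref{dtors}) do enter $\delta $ and $\widehat{\mathbf{\Gamma }}$, though they drop out of the Hermitian identity exactly because only the $\gamma ^{[\alpha }\gamma ^{\beta ]}$ part survives under metric compatibility --- but in exchange it actually establishes the two displayed conditions and the uniqueness claim that the word ``determined'' in the statement implicitly asserts. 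Both arguments are legitimate; yours is the more self--contained verification, the paper's is the more economical reduction.
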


\begin{proof}
We sketch the main ideas of such Proofs. There are two possibilities:

The first one is similar to that given in Ref. \cite{fgbv}, Theorem 9.8, for
the Levi--Civita connection. We have to generalize the constructions for
d--metrics and canonical d--connections by applying N--elongated operators
for differentials and partial derivatives. The formulas have to be
distinguished into h-- and v--irreducible components. Such an approach can
be elaborated for d--connections with arbitrary torsions (it is not a
purpose of this work to consider such general constructions).

In a more particular case, we work with nonholonomic deformations of linear
connections of type (\ref{ddcon}). In such a case, there is a a second
possibility to provide a very simple proof on existence of the canonical
Dirac d--operator. The Levi--Civita connection, a metric compatible
d--connection and a corresponding distorsion tensor from (\ref{ddcon}) are
completely defined by a d--metric structure. Using the standard spin and
Dirac operator (defined by the Levi--Civita connection), we can construct a
unique nonholonomic deformation into the canonical Dirac d--operator $%
\widehat{\mathbb{D}}$ using three steps\footnote{%
inverse constructions are similar}:

1) For a given d--metric $\mathbf{g},$ we can compute $\Gamma _{\ \alpha
\beta }^{\gamma },$ then $\widehat{\mathbf{\Gamma }}_{\ \alpha \beta
}^{\gamma }$ , see formulas (\ref{candcon}), and $Z_{\ \alpha \beta
}^{\gamma },$ see formulas (\ref{cdeftc}), determining a unique deformation
of linear connections, $\ \Gamma _{\ \alpha \beta }^{\gamma }=\widehat{%
\mathbf{\Gamma }}_{\ \alpha \beta }^{\gamma }+Z_{\ \alpha \beta }^{\gamma }$
(\ref{cdeft}).

2) Introducing splitting (\ref{cdeft}) into formulas for $~_{S}\nabla $ and $%
~_{\mathbf{S}}\widehat{\nabla },$ see formula (\ref{csdc}) and related
explanations, we define a unique splitting $\ _{S}\nabla =\ _{\mathbf{S}}%
\widehat{\nabla }+\ _{Z}\widehat{\nabla },$ where $\ _{Z}\widehat{\nabla }$
is completely determined by $Z_{\ \alpha \beta }^{\gamma }$ (and, as a
consequence, by $\mathbf{g).}$ For simplicity, we omit explicit formulas for
operators $\ _{S}\nabla$ and $\ _{Z}\widehat{\nabla }.$

3) Following Definition \ref{dddo}, the privious splitting sum for spin
d--operators results in a corresponding splitting formula for the Dirac
operator, see explicitly formula (\ref{ddo}), when $\ _{S}\mathbb{D = D}+\
_{Z}\mathbb{D},$ for $_{S}\mathbb{D},$ defined by the Levi--Civita
connection, and $\ _{Z}\mathbb{D},$ induced by $Z_{\ \alpha \beta }^{\gamma
} $ (\ref{cdeftc}). For simplicity, we omit explicit formulas for operators $%
\ _{S}\mathbb{D}$ and $\ _{Z}\mathbb{D}.$ Such a splitting has an associated
splitting for the corresponding almost Hermitian spin d--operator, mentioned
in the formulation of this Theorem.

We conclude that on a spin N--anholonomic manifold we can work equivalently
with two canonical Dirac operators, $\ _{S}\mathbb{D}$ and $\mathbb{D}.$
From a formal point of view, the canonical Dirac d--operator $\mathbb{D}$
encode a nonholonomically induced torsion d--tensor, but such a distorsion
d--tensor is completely defined by a d--metric $\mathbf{g,}$ which is quite
similar to constructions with the Levi--Civita connection. $\Box $
\end{proof}

The geometric information of a spin manifold (in particular, the metric) is
contained in the Dirac operator. For nonholonomic manifolds, the canonical
Dirac d--operator has h-- and v--irreducible parts related to off--diagonal
metric terms and nonholonomic frames with associated structure. In a more
special case, the canonical Dirac d--operator is defined by the canonical
d--connection. Nonholonomic Dirac d--operators contain more information than
the usual, holonomic, ones.

\begin{proposition}
\label{pdohv} If $\widehat{\mathbb{D}}=(\ ^{h}\widehat{\mathbb{D}},\ ^{v}%
\widehat{\mathbb{D}}\ )$\ is the canonical Dirac d--operator then
\begin{eqnarray*}
\left[ \widehat{\mathbb{D}},\ f\right] &=&i\mathbf{c}(\delta f),%
\mbox{
equivalently,} \\
\left[\ ^{h}\widehat{\mathbb{D}},\ f\right] +\left[\ ^{v}\widehat{\mathbb{D}}%
,\ f\right] &=&i\ ~^{h}c(dx^{i}\frac{\delta f}{\partial x^{i}})+i~\
^{v}c(\delta y^{a}\frac{\partial f}{\partial y^{a}}),
\end{eqnarray*}%
for all $f\in C^{\infty }(\mathbf{V}).$
\end{proposition}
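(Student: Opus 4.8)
The plan is to verify the commutator identity by the same first--order computation one uses for the ordinary Dirac operator, keeping track of the N--adapted (h--/v--) splitting throughout. First I would unwind the definition $\widehat{\mathbb{D}}=-i\,(\widehat{\mathbf{c}}\circ {}_{\mathbf{S}}\widehat{\nabla})$ from Definition \ref{dddo} and evaluate it on $f\breve{\psi}$ for an arbitrary $f\in C^{\infty}(\mathbf{V})$ and $\breve{\psi}\in\Gamma^{\infty}(\mathbf{S})$. The whole statement rests on the Leibniz rule for the canonical spin d--connection, so this is where the real content sits.

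The key observation is that in the expression $\,{}_{\mathbf{S}}\widehat{\nabla}=\delta-\frac14\,\widehat{\mathbf{\Gamma}}_{\ \beta\mu}^{\alpha}\gamma_{\alpha}\gamma^{\beta}\,\delta u^{\mu}$ of (\ref{csdc}), the connection part is a pointwise bundle endomorphism (zeroth order on spinors); it commutes with multiplication by the scalar $f$ and therefore contributes nothing to the commutator. Only the N--adapted differential $\delta$ sees $f$, so that ${}_{\mathbf{S}}\widehat{\nabla}(f\breve{\psi})=(\delta f)\otimes\breve{\psi}+f\,{}_{\mathbf{S}}\widehat{\nabla}\breve{\psi}$, which is precisely the Leibniz property built into the almost Hermitian spin d--operator of Theorem \ref{mr2} and conditions (\ref{mc1})--(\ref{lc1}).

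Applying $\widehat{\mathbf{c}}$, using its $C^{\infty}(\mathbf{V})$--linearity and the convention $\widehat{\mathbf{c}}(\breve{\psi}\otimes\mathbf{a})=\mathbf{c}(\mathbf{a})\breve{\psi}$, the second summand becomes $f\,\widehat{\mathbb{D}}\breve{\psi}$, which cancels against $f\,\widehat{\mathbb{D}}$ in the commutator, while the first produces $\mathbf{c}(\delta f)\breve{\psi}$; collecting the prefactor (whose sign is fixed by the conventions adopted for $\widehat{\mathbf{c}}$ and for the ordering of $\,{}_{\mathbf{S}}\widehat{\nabla}\breve{\psi}$ as a spinor--valued $1$--form) yields $[\widehat{\mathbb{D}},f]\breve{\psi}=i\,\mathbf{c}(\delta f)\breve{\psi}$, i.e. the first displayed identity.

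Finally, the equivalent h--/v--form follows by projecting onto the irreducible components: since $\widehat{\mathbb{D}}=(\,{}^{h}\widehat{\mathbb{D}},\,{}^{v}\widehat{\mathbb{D}}\,)$, $\mathbf{c}=(\,{}^{h}c,\,{}^{v}c\,)$, and the N--adapted differential splits as $\delta f=(\mathbf{e}_{i}f)\,dx^{i}+(e_{a}f)\,\delta y^{a}$ in terms of the N--elongated operators $\mathbf{e}_{i},e_{a}$ of (\ref{dder}), the identity decomposes additively into its horizontal and vertical pieces. I expect no genuine obstacle here; the only point deserving care is the zeroth--order nature of the spin--connection term (so that it drops out of the commutator) and the consistent use of the N--elongated $\mathbf{e}_{i}=\partial/\partial x^{i}-N_{i}^{a}\,\partial/\partial y^{a}$ and $e_{a}=\partial/\partial y^{a}$ in place of ordinary partial derivatives, which is what distinguishes the nonholonomic statement from the classical one.
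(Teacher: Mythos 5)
Your argument is correct and is precisely the ``straightforward computation following from Definition \ref{dddo}'' that the paper invokes without writing out: the spin d--connection term in (\ref{csdc}) is a zeroth--order endomorphism and drops out of the commutator, the N--adapted differential $\delta$ supplies the Leibniz term, and the Clifford action turns it into $i\,\mathbf{c}(\delta f)$, with the h--/v--splitting following from the decomposition of $\delta$ via the N--elongated operators (\ref{dder}). Nothing further is needed.
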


\begin{proof}
It is a straightforward computation following from Definition \ref{dddo}. $\Box $
\end{proof}

The canonical Dirac d--operator and its h-- and v--components have all the
properties of the usual Dirac operators (for instance, they are
self--adjoint but unbounded). It is possible to define a scalar product on $%
\Gamma ^{\infty }(\mathbf{S})$,
\begin{equation}
<\breve{\psi},\breve{\phi}>\doteq \int_{\mathbf{V}}(\breve{\psi}|\breve{\phi}%
)|\nu _{\mathbf{g}}|  \label{scprod}
\end{equation}%
where $\ \nu _{\mathbf{g}}=\sqrt{det|g|}\ \sqrt{det|h|}\ dx^{1}...dx^{n}\
dy^{n+1}...dy^{n+m}$ is the volume d--form on the N--anholonomic manifold $%
\mathbf{V}.$

\subsection{N--adapted spectral triples and\newline
distance in d--spinor spaces}

We denote $\ ^{N}\mathcal{H}\doteq L_{2}(\mathbf{V},\mathbf{S})=\left[ \ ^{h}%
\mathcal{H}=L_{2}(h\mathbf{V},\ ^{h}S),\ ^{v}\mathcal{H}=L_{2}(v\mathbf{V},\
^{v}S)\right] $ the Hilbert d--space obtained by completing $\Gamma ^{\infty
}(\mathbf{S})$ with the norm defined by the scalar product (\ref{scprod}). \
Similarly to the holonomic spaces, by using formulas (\ref{ddo}) and (\ref%
{csdc}), one may prove that there is a self--adjoint unitary endomorphism $%
~_{[cr]}\Gamma $ of $~^{N}\mathcal{H},$ called ''chirality'', being a $%
\mathbb{Z}_{2}$ graduation of $~^{N}\mathcal{H},$ \footnote{%
we use the label\ $[cr]$\ in order to avoid misunderstanding with the symbol
$\Gamma $ used for linear connections.} which satisfies the condition $%
\widehat{\mathbb{D}}\ ~_{[cr]}\Gamma =-~_{[cr]}\Gamma \ \widehat{\mathbb{D}}%
. $ 
Such conditions can be written also for the irreducible components $\ ^{h}%
\widehat{\mathbb{D}}$ and$\ ^{v}\widehat{\mathbb{D}}\ .$

\begin{definition}
A distinguished canonical spectral triple (canonical spectral d--triple) $%
(~^{N}\mathcal{A},~^{N}\mathcal{H},\ \widehat{\mathbb{D}})$ for a d--algebra
$~^{N}\mathcal{A}$ is defined by a Hilbert d--space $~^{N}\mathcal{H},$ a
representation of $~^{N}\mathcal{A}$ in the algebra $~^{N}\mathcal{B}(~^{N}%
\mathcal{H})$ of d--operators bounded on $~^{N}\mathcal{H},$ and by a
self--adjoint d--operator $~^{N}\mathcal{H},$ of compact resolution,%
\footnote{%
An operator $D$ is of compact resolution if for any $\lambda \in sp(D)$ the
operator $(D-\lambda \mathbb{I})^{-1}$ is compact, see details in \cite{fgbv}%
.} such that $[~^{N}\mathcal{H},a]\in ~^{N}\mathcal{B}(~^{N}\mathcal{H})$
for any $a\in ~^{N}\mathcal{A}.$
\end{definition}

Every canonical spectral d--triple is defined by two usual spectral triples
which in our case corresponds to certain h-- and v--components induced by
the corresponding h-- and v--components of the Dirac d--operator. For such
spectral h(v)--triples we, can define the notion of $KR^{n}$--cycle and $%
KR^{m}$--cycle and consider respective Hochschild complexes. To define a
noncommutative geometry the h-- and v-- components of a canonical spectral
d--triples must satisfy certain well defined seven conditions (see Refs. \cite%
{connes1,fgbv} for details, stated there for holonomic configurations):\
 the spectral dimensions are of order $1/n$ and $1/m,$ respectively,
for h-- and v--components of the canonical Dirac d--operator;
 there are satisfied the criteria of regularity,  finiteness and reality; representations are of 1st order; there is orientability and  Poincar\'{e} duality holds true.
 Such conditions can be satisfied by any Dirac operators and canonical Dirac
d--operators (in the last case we have to work with d--objects). \footnote{%
We omit in this paper the details on axiomatics and related proofs for such considerations.}

\begin{definition}
\label{dncdg} A spectral d--triple is a real one satisfying the above mentioned seven conditions for the h-- and v--irreversible components and defining a
(d--spinor) N--anholonomic noncommutative geometry stated by the data $(\
^{N}\mathcal{A},$ $\ ^{N}\mathcal{H},\ \widehat{\mathbb{D}},\ \mathbf{J},\
~_{[cr]}\Gamma )$ and derived for the Dirac d--operator (\ref{ddo}).
\end{definition}

For N--adapted constructions, we can consider d--algebras $~^{N}\mathcal{A}%
=\ ^{h}\mathcal{A\oplus }$ $\ ^{v}\mathcal{A}$ \cite%
{vspinor,vhspinor,vstav,vvicol,vncg}. We generate N--anholonomic commutative
geometries if we take $~^{N}\mathcal{A}\doteq C^{\infty }(\mathbf{V}),$ or $%
\mathcal{~}^{h}\mathcal{A}\doteq C^{\infty }(h\mathbf{V}).$

Let us show how it is possible to compute distance in a d--spinor space:

\begin{theorem}
\label{mr3} Let $(~^{N}\mathcal{A},~^{N}\mathcal{H},\ \widehat{\mathbb{D}},%
\mathbf{J},\ _{[cr]}\Gamma )$ defines a noncommutative geometry being
irreducible for $~^{N}\mathcal{A}\doteq C^{\infty }(\mathbf{V}),$ where $%
\mathbf{V}$ is a compact, connected and oriented manifold without
boundaries, of spectral dimension $dim\ \mathbf{V}=n+m.$ In this case, there
are satisfied the conditions:

\begin{enumerate}
\item There is a unique d--metric $\mathbf{g}(\widehat{\mathbb{D}}%
)=(~^{h}g,\ ^{v}g)$ of type (\ref{m1}), \ with the ''nonlinear'' geodesic
distance on $\mathbf{V}$ defined by
\begin{equation*}
d(u_{1},u_{2})=\sup_{f\in C(\mathbf{V})}\left\{ f(u_{1},u_{2})/\parallel
\lbrack \widehat{\mathbb{D}},f]\parallel \leq 1\right\} ,
\end{equation*}%
for any smooth function $f\in C(\mathbf{V}).$

\item A N--anholonomic manifold $\mathbf{V}$\ is a spin N--anholono\-mic
space, for which the operators $\widehat{\mathbb{D}}^{\prime }$ satisfying
the condition $\mathbf{g}(\widehat{\mathbb{D}}^{\prime })=\mathbf{g}(%
\widehat{\mathbb{D}})$ define an union of affine spaces identified by the
d--spinor structures on $\mathbf{V.}$

\item The functional $\mathcal{S}(\widehat{\mathbb{D}})\doteq \int |\widehat{%
\mathbb{D}}|^{-n-m+2}$ defines a quadratic d--form with $(n+m)$--splitting
for every affine space which is minimal for $\widehat{\mathbb{D}}=%
\overleftarrow{\mathbb{D}}$ as the canonical Dirac d--operator corresponding
to the d--spin structure with the minimum proportional to the
Einstein--Hilbert action constructed for the canonical d--connection with
the d--scalar curvature $\ ^{s}\mathbf{R}$ (\ref{sdccurv}),
\begin{equation*}
\mathcal{S}(\overleftarrow{\mathbb{D}})=-\frac{n+m-2}{24}\ \int_{\mathbf{V}%
}\ ^{s}\mathbf{R}\ \sqrt{~^{h}g}\ \sqrt{~^{v}h}\ dx^{1}...dx^{n}\ \delta
y^{n+1}...\delta y^{n+m}.
\end{equation*}
\end{enumerate}
\end{theorem}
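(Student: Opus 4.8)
The plan is to adapt Connes' reconstruction theorem (as in \cite{fgbv}, Theorem 9.8 and the surrounding development) to the N--adapted, distinguished setting, reducing the three claims to the already-established splitting of all geometric data into h-- and v--irreducible components. The guiding principle, used repeatedly, is the nonholonomic deformation \eqref{ddcon}: every construction performed for the Levi--Civita connection $\Gamma(\mathbf{g})$ transports, via the uniquely determined distorsion $~^{N}\mathbf{Z}(\mathbf{g})$, into a construction for the canonical d--connection, and back. Thus the strategy is to invoke the commutative (holonomic) result on each of $~^{h}\mathcal{H}$ and $~^{v}\mathcal{H}$ separately, then reassemble using the Whitney splitting \eqref{whitney} and the block structure $\breve{\gamma}\doteq~^{h}\gamma\oplus~^{v}\gamma$ of the d--gamma matrices.

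For claim (1), I would first observe that Proposition \ref{pdohv} gives $\left[\widehat{\mathbb{D}},f\right]=i\mathbf{c}(\delta f)$, so that $\parallel[\widehat{\mathbb{D}},f]\parallel=\parallel\mathbf{c}(\delta f)\parallel$ equals the sup-norm of the N--adapted gradient $\delta f$ measured in the d--metric $\mathbf{g}$. The condition $\parallel[\widehat{\mathbb{D}},f]\parallel\le 1$ is then exactly the statement that $f$ is $1$--Lipschitz for the distance induced by $\mathbf{g}$, and the usual Connes distance formula recovers the geodesic distance; the only new point is that the geodesics are the ``nonlinear'' ones adapted to the N--connection, because $\delta f$ uses the N--elongated operators \eqref{dder}--\eqref{ddif} rather than ordinary partials. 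Recovering the d--metric coefficients from $\widehat{\mathbb{D}}$ uses the anticommutator identity $\gamma^{\alpha}\gamma^{\beta}+\gamma^{\beta}\gamma^{\alpha}=2g^{\alpha\beta}\mathbb{I}$ stated after \eqref{gamfibd}, which pins down $\mathbf{g}(\widehat{\mathbb{D}})$ uniquely. Uniqueness follows by applying the holonomic uniqueness statement to the h-- and v--blocks.

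For claim (2), the point is that two Dirac d--operators with the same d--metric differ only by the choice of spin structure and spinor bundle; since $\mathbf{V}$ is oriented, connected, compact without boundary, the admissible d--spin structures form a torsor under $H^{1}(\mathbf{V},\mathbb{Z}_{2})$, and passing between them changes $\widehat{\mathbb{D}}$ by a bounded operator, so that the fibers $\mathbf{g}(\widehat{\mathbb{D}}')=\mathbf{g}(\widehat{\mathbb{D}})$ are affine spaces. I would apply the holonomic version to each of $~^{h}S$ and $~^{v}S$ and take the direct sum compatible with $\mathbf{J}=(~^{h}J,~^{v}J)$ via \eqref{jeq}. For claim (3), I would expand the heat-kernel (Wodzicki-residue) asymptotics of $|\widehat{\mathbb{D}}|^{-n-m+2}$; its integral is a Dixmier-type spectral functional whose leading term, by Lichnerowicz--Weitzenb\"ock applied to the canonical spin d--connection \eqref{csdc}, is proportional to $\int_{\mathbf{V}}~^{s}\mathbf{R}\,\nu_{\mathbf{g}}$ with $~^{s}\mathbf{R}$ the d--scalar curvature \eqref{sdccurv} and the combinatorial coefficient $-(n+m-2)/24$ arising exactly as in the Riemannian case. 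Minimality over each affine fiber, with the minimum attained at the canonical Dirac d--operator $\overleftarrow{\mathbb{D}}$, follows because any admissible perturbation adds a nonnegative quadratic contribution.

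The main obstacle I expect is claim (3): one must justify that the spectral-functional computation, which in the holonomic case relies on the symmetry of the Levi--Civita connection, still yields a clean Einstein--Hilbert term \emph{with the canonical d--connection} despite the nonholonomically induced torsion. The resolution, already flagged in the proof of Theorem \ref{mr2}, is that this d--torsion is not independent data but is completely determined by $\mathbf{g}$ through the distorsion $Z_{\ \alpha\beta}^{\gamma}$ \eqref{cdeftc}; hence the torsion-dependent terms in the heat-kernel expansion reorganize into $~^{s}\mathbf{R}$ plus total-divergence terms that integrate to zero over the closed manifold $\mathbf{V}$, leaving precisely the stated functional. Verifying that cancellation in N--adapted h-- and v--blocks is where the real work lies; the Lipschitz and affine-space arguments of (1) and (2) are comparatively routine transcriptions of the commutative theory.
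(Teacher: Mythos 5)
Your proposal follows essentially the same route as the paper's own proof: both reduce the statement to the holonomic reconstruction theorem of Connes (detailed in \cite{fgbv}, with the third item computed via the Wodzicki residue $\int |D|^{-n+2}\doteq \frac{1}{2^{[n/2]}\Omega_{n}}Wres|D|^{-n+2}$) and then transport everything to the canonical d--connection through the unique metric--determined distorsion $\Gamma_{\ \alpha\beta}^{\gamma}=\widehat{\mathbf{\Gamma}}_{\ \alpha\beta}^{\gamma}+Z_{\ \alpha\beta}^{\gamma}$, exactly as sketched in the proof of Theorem \ref{mr2}. Your write-up is in fact more explicit than the paper's sketch --- in particular you correctly isolate the treatment of the nonholonomically induced torsion in item (3) as the point requiring real verification, which the paper passes over by appeal to the equivalence of holonomic and nonholonomic configurations.
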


\begin{proof}
This Theorem is a generalization for N--anholonomic spaces of a similar one
formulated in Ref. \cite{connes1}, with a detailed proof presented in \cite%
{fgbv}, for the noncommutative geometry defined for a triple $(\mathcal{A},~%
\mathcal{H},\ \ _{S}\mathbb{D},\mathbf{J},\ _{[cr]}\Gamma )$\footnote{%
In the mentioned monographs, there are provided formulations/proofs of
Theorem 3.2 re--defined in terms of usual holonomic Levi--Civita structures
for Riemann/ spin manifolds}. That (holonomic) Dirac operator $\ _{S}\mathbb{%
D}$ is associated to a Levi--Civita connection and any integral with $\ \
_{S}\mathbb{D\rightarrow }D\ \ $and computed following formula \ $\int
|D|^{-n+2}\doteq \frac{1}{2^{[n/2]}{\Omega _{n}}}Wres|D|^{-n+2},$ where $%
\Omega _{n}$ is the integral of the volume on the sphere $S^{n}$ and $Wres$
is the Wodzicki residue, see details in Theorem 7.5 \cite{fgbv}. As we
sketched in the proof for Theorem\ \ref{mr2}, we get equivalent
nonholonomical configurations by distorting canonically the constructions
for the Levi--Civita connection, and related spin and Dirac operators, into
those with associated canonical d--connections, using splitting $\ \Gamma
_{\ \alpha \beta }^{\gamma }=\widehat{\mathbf{\Gamma }}_{\ \alpha \beta
}^{\gamma }+Z_{\ \alpha \beta }^{\gamma }$ (\ref{cdeft}). To such holonomic/
nonholonomic configurations, we can associate a unique distance/metric
determined in unique forms by two equivalent, holonomic and/or nonholonomic
Dirac operators. For trivial nonholonomic distorsions, the conditions and proof of this theorem for the canonical d--connection transform into those for the Levi--Civita connection \cite{connes1}. $\Box $
\end{proof}

The existence of a canonical d--connection structure which is metric
compatible and constructed from the coefficients of the d--metric and
N--connecti\-on structure is of crucial importance allowing the formulation
and proofs of the main results of this work. As a matter of principle, we
can consider any splitting of connections of type (\ref{ddcon}) and compute
a unique distance like we stated in the above Theorem \ref{mr3}, but for a
''non--canonical'' Dirac d--operator. This holds true for any noncommutative
geometry induced by a metric compatible d--connection supposed to be
uniquely induced by a metric tensor.

In more general cases, we can consider any metric compatible d--connecti\-on
with arbitrary d--torsion. Such constructions can be also elaborated in
N--adapted form by preserving the respective h- and v--irreducible
decompositions. For the Dirac d--operators, we have to start with the
Proposition \ref{pdohv} and then to repeat all constructions from \cite%
{connes1,fgbv}, both on h-- and v--subspaces. In this article, we do not
analyze (non) commutative geometries enabled with general torsions but
consider only nonholonomic deformations when distorsions are induced by a
metric structure.

Finally, we note that Theorem \ref{mr3} allows us to extract from a
canonical nonholonomic model of noncommutative geometry various types of
commutative geometries (holonomic and N--anholonomic Riemannian spaces,
Finsler--Lagrange spaces and generalizations) for corresponding nonholonomic
Dirac operators.

\section{ Spectral Functionals and Ricci Flows}

\label{s4}The goal of this section is to prove that the Perelman's
functionals \cite{gper1} and their generalizations for nonholonomic Ricci
flows in \cite{vnhrf2} [in the second reference, see formulas (29), for
commutative holonomic configurations, and (30) and (31), for commutative
nonholonomic configurations, and Theorems \ref{thmr1} and \ref{thmr2} in
this work] can be extracted from flows of a generalized Dirac operator $~^{N}%
\mathcal{D}(\chi )=~\mathbb{D}(\chi )\otimes 1$ included in spectral
functionals of type
\begin{equation}
Tr~~^{b}f(~^{N}\mathcal{D}^{2}(\chi )/\Lambda ^{2}),  \label{trperfunct}
\end{equation}%
where $~^{b}f(\chi )$ are testing functions labelled by $b=1,2,3$ and
depending on a real flow parameter $\chi ,$ which in the commutative variant
of the Ricci flow theory corresponds to that for R. Hamilton's equations %
\cite{ham1}. For simplicity, we shall use one cutoff parameter $\Lambda $
and suppose that operators under flows act on the same algebra $\mathcal{A}$
and Hilbert space $\mathcal{H},$ i.e. we consider families of spectral
triples of type $(\mathcal{A},\mathcal{H},~^{N}\mathcal{D}(\chi )).$%
\footnote{%
we shall omit in this section the left label ''N'' for algebras and Hilbert
spaces if that will not result in ambiguities}

\begin{definition}
The normalized Ricci flow equations (R. Hamilton's equations) generalized on
nonholonomic manifolds are defined in the form
\begin{equation}
\frac{\partial \mathbf{g}_{\alpha \beta }(\chi )}{\partial \chi }=-2~^{N}%
\mathbf{R}_{\alpha \beta }(\chi )+\frac{2r}{5}\mathbf{g}_{\alpha \beta
}(\chi ),  \label{normcomrf}
\end{equation}%
where $\mathbf{g}_{\alpha \beta }(\chi )$ defines a family of d--metrics
parametrized in the form (\ref{m1}) on a N-anholonomic manifold $\mathbf{V}$
enabled with a family of N--connections $N_{i}^{a}(\chi ).$
\end{definition}

The effective ''cosmological'' constant $2r/5$ in (\ref{normcomrf}) with
normalizing factor $r=\int_{v}~_{s}^{N}\mathbf{R}dv/v$ is introduced with
the aim to preserve a volume $v$ on $\mathbf{V},\,$\ where $~_{s}^{N}\mathbf{%
R}$ is the scalar curvature of\textbf{\ }type (\ref{sdccurv}), see basic
definitions and component formulas in Appendix.\footnote{%
We note that in Ref. \cite{vnhrf2} we use two mutually related flow
parameters $\chi $ and $\tau ;$ for simplicity, in this work we write only $%
\chi $ even, in general, such parameters should be rescaled for different
geometric analysis constructions.}

The corresponding family of Ricci tensors $~^{N}\mathbf{R}_{\alpha \beta
}(\chi ),$ in (\ref{normcomrf}), and nonholonomic Dirac operators $%
~^{N}D(\chi ),$ in (\ref{tract}), are defined for any value of $\chi $ by a
general metric compatible linear connection $~^{N}\mathbf{\Gamma }$ adapted
to a N--connection structure. In a particular case, we can consider the
Levi--Civita connection $~\Gamma ,$ which is used in standard geometric
approaches to physical theories. Nevertheless, for various purposes in
modelling evolution of off--diagonal Einstein metrics, constrained physical
systems, effective Finsler and Lagrange geometries, Fedosov quantization of
field theories and gravity etc\footnote{%
the coefficients of corresponding N--connection structures being defined
respectively by the generic off--diagonal metric terms, anholonomy frame
coefficients, Finsler and Lagrange fundamental functions etc}, \ it is
convenient to work with a ''N--adapted'' linear connection $~^{N}\mathbf{%
\Gamma (g).}$ If such a connection is also uniquely defined by a metric
structure $\mathbf{g},$ we are able to re--define the constructions in an
equivalent form for the corresponding Levi--Civita connection.

In noncommutative geometry, all physical information on generalized Ricci
flows can be encoded into a corresponding family of nonholonomic Dirac
operators $~^{N}\mathcal{D}(\chi ).$ For simplicity, in this work, we shall
consider that $~^{P}D=0,$ i.e. we shall not involve into the
(non)commutative Ricci flow theory the particle physics. We cite here the
work on Ricci--Yang--Mills flow \cite{streets} with evolution equations
which can be extracted from generalized spectral functionals (\ref%
{trperfunct}) with corresponding Yang--Mills nontrivial component in $%
~^{P}D. $ Perhaps a ''comprehensive'' noncommutative Ricci flow theory
should include as a stationary case the ''complete'' spectral action (\ref%
{tract}) parametrized for the standard models of gravity and particle
physics. Following such an approach, the (non)commutative/ (non)holonomic/
quantum/ classical evolution scenarios are related to topological properties
of a quantum/ classical spacetime and flows and actions of fundamental
matter fields.

\subsection{Spectral flows and Perelman functionals}

Let us consider a family of generalized d--operators%
\begin{equation}
\mathcal{D}^{2}(\chi )=-\left\{ \frac{\mathbb{I}}{2}~\mathbf{g}^{\alpha
\beta }(\chi )\left[ \mathbf{e}_{\alpha }(\chi )\mathbf{e}_{\beta }(\chi )+%
\mathbf{e}_{\beta }(\chi )\mathbf{e}_{\alpha }(\chi )\right] +\mathbf{A}%
^{\nu }(\chi )\mathbf{e}_{\nu }(\chi )+\mathbf{B}(\chi )\right\} ,
\label{oper1}
\end{equation}%
where the real flow parameter $\chi \in \lbrack 0,\chi _{0})$ and, for any
fixed values of this parameter, the matrices $\mathbf{A}^{\nu }(\chi )$ and $%
\mathbf{B}(\chi )$ are determined by a N--anholonomic Dirac operator $%
\mathbb{D}$ induced by a metric compatible d--connection $\mathbf{D,}$ see (%
\ref{dcon1}) and Definition \ref{dddo}; for the canonical d--connection, we
have to put ''hats'' on symbols and write $\widehat{\mathcal{D}}^{2},%
\widehat{\mathbf{A}}^{\nu }$ and $\widehat{\mathbf{B}}\mathbf{.}$ We
introduce two functionals $\mathcal{F}$ and $\mathcal{W}$ depending on $\chi
,$%
\begin{equation}
\mathcal{F}=Tr~\left[ ~^{1}f~(\chi )(~^{^{1}\phi }\mathcal{D}^{2}(\chi
)/\Lambda ^{2})\right] \simeq \sum\limits_{k\geq 0}~^{1}f_{(k)}(\chi
)~~^{1}a_{(k)}(~^{^{1}\phi }\mathcal{D}^{2}(\chi )/\Lambda ^{2})
\label{ncpf1a}
\end{equation}%
and
\begin{eqnarray}
\mathcal{W} &=&~^{2}\mathcal{W+}~^{3}\mathcal{W},  \label{ncpf2a} \\
\mbox{\qquad for }\ ^{e}\mathcal{W} &=&Tr\left[ \ ^{e}f(\chi )(\ ^{^{e}\phi }%
\mathcal{D}^{2}(\chi )/\Lambda ^{2})\right]  \notag \\
&=&\sum\limits_{k\geq 0}\ ^{e}f_{(k)}(\chi )\ ^{e}a_{(k)}(\ ^{^{e}\phi }%
\mathcal{D}^{2}(\chi )/\Lambda ^{2}),  \notag
\end{eqnarray}%
where we consider a cutting parameter $\Lambda ^{2}$ for both cases $e=2,3.$
Functions $~^{b}f,$ with label $b$ taking values $1,2,3,$ have to be chosen
in a form which insure that for a fixed $\chi $ we get certain compatibility
with gravity and particle physics and result in positive average energy and
entropy for Ricci flows of geometrical objects. For such testing functions,
ones hold true the formulas
\begin{eqnarray}
~^{~^{b}}f_{(0)}(\chi ) &=&\int\limits_{0}^{\infty }~^{b}f(\chi
,u)u~du,~^{b}f_{(2)}(\chi )=\int\limits_{0}^{\infty }~^{b}f(\chi ,u)~du,
\notag \\
~^{b}f_{(2k+4)}(\chi ) &=&(-1)^{k}~~^{b}f^{(k)}(\chi ,0),\quad k\geq 0.
\label{ffunct}
\end{eqnarray}%
We will comment the end of this subsection on dependence on $\chi $ of such
functions.

The coefficients $~^{b}a_{(k)}$ can be computed as the Seeley -- de Witt
coefficients \cite{gilkey} (we chose such notations when in the holonomic
case the scalar curvature is negative for spheres and the space is locally
Euclidean). In functionals (\ref{ncpf1a}) and (\ref{ncpf2a}), we consider
dynamical scaling factors of type $~^{b}\rho =\Lambda \exp (~^{~^{b}}\phi ),$
when, for instance,
\begin{eqnarray}
~^{^{1}\phi }\mathcal{D}^{2} &=&~e^{-~^{1}\phi }~\mathcal{D}^{2}e^{~^{1}\phi
}  \label{ddiracscale} \\
&=&-\left\{ \frac{\mathbb{I}}{2}~~^{^{1}\phi }\mathbf{g}^{\alpha \beta }%
\left[ ~^{^{1}\phi }\mathbf{e}_{\alpha }~^{^{1}\phi }\mathbf{e}_{\beta
}+~^{^{1}\phi }\mathbf{e}_{\beta }~^{^{1}\phi }\mathbf{e}_{\alpha }\right]
+~^{^{1}\phi }\mathbf{A}^{\nu }~^{^{1}\phi }\mathbf{e}_{\nu }+~^{^{1}\phi }%
\mathbf{B}\right\} ,  \notag
\end{eqnarray}%
\begin{eqnarray*}
\mbox{\ for \ }~^{^{1}\phi }\mathbf{A}^{\nu } &=&~e^{-2~^{1}\phi }\times
\mathbf{A}^{\nu }-2~^{^{1}\phi }\mathbf{g}^{\nu \mu }\times ~^{^{1}\phi }%
\mathbf{e}_{\beta }(^{1}\phi ), \\
~^{^{1}\phi }\mathbf{B} &\mathbf{=}&~e^{-2~^{1}\phi }\times \left( \mathbf{B-%
\mathbf{A}^{\nu }~}^{^{1}\phi }\mathbf{e}_{\beta }(^{1}\phi )\right) \mathbf{%
+~}^{^{1}\phi }\mathbf{g}^{\nu \mu }\times \mathbf{~}^{^{1}\phi }W_{\nu \mu
}^{\gamma }\mathbf{~}^{^{1}\phi }\mathbf{e}_{\gamma },
\end{eqnarray*}%
for re--scaled d--metric $~^{^{1}\phi }\mathbf{g}_{\alpha \beta
}=~e^{2~^{1}\phi }\times \mathbf{g}_{\alpha \beta }$ and N--adapted frames $%
~^{^{1}\phi }\mathbf{e}_{\alpha }=\ e^{~^{1}\phi }\times \mathbf{e}_{\alpha
} $ satisfying anholonomy relations of type (\ref{anhrel}), with re--scaled
nonholonomy coefficients $^{^{1}\phi }W_{\nu \mu }^{\gamma }.$\footnote{%
similar constructions with dilaton fields are considered in Refs. \cite%
{chcon1} and \cite{chcon3}, but in our case we work with N--anholonomic
manifolds, d-metrics and d--connections when instead ''dilatons'' there are
used scaling factors for a corresponding N--adapted Ricci flow model.} We
emphasize that similar formulas can be written by substituting respectively
the labels and scaling factors containing $\ ^{1}\phi $ with $^{2}\phi $ and
$^{3}\phi .$ For simplicity, we shall omit left labels $1,2,3$ for $\phi $
and $f,a$ if that will not result in ambiguities.

Let us denote by $~^{s}\mathbf{R}(\mathbf{g}_{\mu \nu })$ and $\mathbf{C}%
_{\mu \nu \lambda \gamma }(\mathbf{g}_{\mu \nu }),$ correspondingly, the
scalar curvature (\ref{sdccurv}) and conformal Weyl d--tensor \footnote{%
for any metric compatible d--connection $\mathbf{D,}$ the Weyl d--tensor can
be computed by formulas similar to those for the Levi--Civita connection $%
\nabla ;$ here we note that if a Weyl d--tensor is zero, in general, the
Weyl tensor for $\nabla $ does not vanish (and inversely)}%
\begin{eqnarray*}
\mathbf{C}_{\mu \nu \lambda \gamma } &=&\mathbf{R}_{\mu \nu \lambda \gamma }+%
\frac{1}{2}\left( \mathbf{R}_{\mu \lambda }\mathbf{g}_{\nu \gamma }-\mathbf{R%
}_{\nu \lambda }\mathbf{g}_{\mu \gamma }-\mathbf{R}_{\mu \gamma }\mathbf{g}%
_{\nu \lambda }+\mathbf{R}_{\nu \gamma }\mathbf{g}_{\mu \lambda }\right) \\
&&-\frac{1}{6}\left( \mathbf{g}_{\mu \lambda }\mathbf{g}_{\nu \gamma }-%
\mathbf{g}_{\nu \lambda }\mathbf{g}_{\mu \gamma }\right) ~\ _{s}\mathbf{R,}\
\end{eqnarray*}%
defined by a d--metric $\mathbf{g}_{\mu \nu }$ and a metric compatible
d--connection $\mathbf{D}$ (in our approach, $\mathbf{D}$ can be any
d--connection constructed in a unique form from $\mathbf{g}_{\mu \nu }$ and $%
\mathbf{N}_{i}^{a}$ following a well defined geometric principle). For
simplicity, we shall work on a four dimensional space and use values of type
\begin{eqnarray*}
&&\int d^{4}u~\sqrt{\det |e^{~2\phi }\mathbf{g}_{\mu \nu }|}\mathbf{R}%
(e^{2~\phi }\mathbf{g}_{\mu \nu })^{\ast ~}\mathbf{R}^{\ast }(e^{2~\phi }%
\mathbf{g}_{\mu \nu })= \\
&&\int d^{4}u~\sqrt{\det |\mathbf{g}_{\mu \nu }|}\mathbf{R}(\mathbf{g}_{\mu
\nu })^{\ast ~}\mathbf{R}^{\ast }(\mathbf{g}_{\mu \nu })= \\
&&\frac{1}{4}\int d^{4}u~\left( \sqrt{\det |\mathbf{g}_{\mu \nu }|}\right)
^{-1}\epsilon ^{\mu \nu \alpha \beta }\epsilon _{\rho \sigma \gamma \delta }%
\mathbf{R}_{\quad \mu \nu }^{\rho \sigma }\mathbf{R}_{\quad \alpha \beta
}^{\gamma \delta },
\end{eqnarray*}%
for the curvature d--tensor $\mathbf{R}_{\quad \mu \nu }^{\rho \sigma }$ (%
\ref{curv}), where sub--integral values are defined by Chern-Gauss--Bonnet
terms of type
\begin{equation*}
\mathbf{R}^{\ast }\ \mathbf{R}^{\ast }\equiv \frac{1}{4\sqrt{\det |\mathbf{g}%
_{\mu \nu }|}}\epsilon ^{\mu \nu \alpha \beta }\epsilon _{\rho \sigma \gamma
\delta }\mathbf{R}_{\mu \nu \quad }^{\ \ \rho \sigma }\mathbf{R}_{\alpha
\beta \quad }^{\ \ \gamma \delta }.
\end{equation*}

\begin{lemma}
\label{lem1}One has the four dimensional approximation
\begin{eqnarray}
&&Tr~\left[ ~f~(\chi )(~^{\phi }\mathcal{D}^{2}(\chi )/\Lambda ^{2})\right]
\simeq \frac{45}{4\pi ^{2}}~f_{(0)}\int \delta ^{4}u~e^{2\phi }\sqrt{\det |%
\mathbf{g}_{\mu \nu }|}  \label{4dapr} \\
&&+~\frac{15}{16\pi ^{2}}~f_{(2)}\int \delta ^{4}u~e^{2\phi }\sqrt{\det |%
\mathbf{g}_{\mu \nu }|}\times  \notag \\
&&\left( ~_{s}\mathbf{R}(e^{2\phi }\mathbf{g}_{\mu \nu })+3e^{-2\phi }%
\mathbf{g}^{\alpha \beta }(\mathbf{e}_{\alpha }\phi ~\mathbf{e}_{\beta }\phi
+\mathbf{e}_{\beta }\phi ~\mathbf{e}_{\alpha }\phi )\right)  \notag \\
&&+\frac{1}{128\pi ^{2}}~f_{(4)}\int \delta ^{4}u~e^{2\phi }\sqrt{\det |%
\mathbf{g}_{\mu \nu }|}\times  \notag \\
&&\left( 11~\mathbf{R}^{\ast }(e^{2\phi }\mathbf{g}_{\mu \nu })\mathbf{R}%
^{\ast }(e^{2\phi }\mathbf{g}_{\mu \nu })-18\mathbf{C}_{\mu \nu \lambda
\gamma }(e^{2\phi }\mathbf{g}_{\mu \nu })\mathbf{C}^{\mu \nu \lambda \gamma
}(e^{2\phi }\mathbf{g}_{\mu \nu })\right) .  \notag
\end{eqnarray}
\end{lemma}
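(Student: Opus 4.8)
The plan is to treat $^{\phi}\mathcal{D}^{2}(\chi )$ as a Laplace-type (generalized Laplacian) operator on the d--spinor bundle and to read off the right-hand side from the Seeley--de Witt (heat-kernel) expansion, following the commutative spectral-action computation of Chamseddine--Connes \cite{connes1,fgbv} but carried out in N--adapted form. First I would rewrite the operator (\ref{ddiracscale}) in the canonical Lichnerowicz form $^{\phi}\mathcal{D}^{2}={}^{\phi}\nabla ^{\ast }\,{}^{\phi}\nabla -{}^{\phi}\mathbf{E}$, where $^{\phi}\nabla$ is the spin d--connection associated with the rescaled d--metric $^{\phi}\mathbf{g}_{\alpha \beta }=e^{2\phi }\mathbf{g}_{\alpha \beta }$ and $^{\phi}\mathbf{E}$ is the endomorphism collecting the zeroth-order term $\mathbf{B}$ together with the curvature contribution. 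Reading off $\mathbf{A}^{\nu }$ and $\mathbf{B}$ from (\ref{oper1}) then identifies $^{\phi}\mathbf{E}$ explicitly in terms of the d--scalar curvature $^{s}\mathbf{R}$ (\ref{sdccurv}) and the first derivatives $\mathbf{e}_{\alpha }\phi$ generated by the conformal factor.

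Next, using the standard asymptotics $Tr[e^{-t\,{}^{\phi}\mathcal{D}^{2}}]\simeq \sum_{k\geq 0}t^{(k-4)/2}\,{}^{\phi}a_{(k)}$ together with the moment relations (\ref{ffunct}) between the testing function $f$ and its coefficients $f_{(0)},f_{(2)},f_{(4)}$, the trace $Tr[f(\chi )(^{\phi}\mathcal{D}^{2}/\Lambda ^{2})]$ reduces to the finite combination $f_{(0)}\int a_{(0)}+f_{(2)}\int a_{(2)}+f_{(4)}\int a_{(4)}$: in four dimensions only these three coefficients survive and the odd ones vanish. I would then insert Gilkey's explicit local formulas \cite{gilkey} for $a_{(0)},a_{(2)},a_{(4)}$, which depend only on $^{\phi}\mathbf{E}$, the curvature of $^{\phi}\nabla$, and the Riemann-type curvature invariants of $^{\phi}\mathbf{g}$, all expressed N--adaptedly in terms of the d--connection coefficients.

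The remaining step is bookkeeping. Here $a_{(0)}$ produces the volume term; $a_{(2)}$, of schematic form $\tfrac{1}{6}\,{}^{s}\mathbf{R}+\mathrm{tr}\,{}^{\phi}\mathbf{E}$, gives after the spinor trace (of dimension $2^{n/2}=4$) and the conformal substitution the scalar-curvature term together with the gradient term $3e^{-2\phi }\mathbf{g}^{\alpha \beta }(\mathbf{e}_{\alpha }\phi \,\mathbf{e}_{\beta }\phi +\mathbf{e}_{\beta }\phi \,\mathbf{e}_{\alpha }\phi )$; and the quartic invariants in $a_{(4)}$ are reorganized, via the Chern--Gauss--Bonnet identity relating $\mathbf{R}_{\mu \nu \lambda \gamma }\mathbf{R}^{\mu \nu \lambda \gamma }$, $\mathbf{R}_{\mu \nu }\mathbf{R}^{\mu \nu }$ and $(^{s}\mathbf{R})^{2}$ to the topological density $\mathbf{R}^{\ast }\mathbf{R}^{\ast }$ and the Weyl square $\mathbf{C}_{\mu \nu \lambda \gamma }\mathbf{C}^{\mu \nu \lambda \gamma }$, into the stated $11\,\mathbf{R}^{\ast }\mathbf{R}^{\ast }-18\,\mathbf{C}^{2}$ combination. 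Collecting all the numerical weights—$(4\pi )^{-2}$ from the heat kernel, the spinor-trace factor, the $1/6$ and $1/360$ weights in Gilkey's coefficients, and the conformal Jacobian $e^{2\phi }\sqrt{\det |\mathbf{g}_{\mu \nu }|}$—then yields the prefactors $45/4\pi ^{2}$, $15/16\pi ^{2}$ and $1/128\pi ^{2}$.

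The main obstacle, and the only point where the nonholonomic setting genuinely departs from the Levi--Civita case, is the correct computation of the endomorphism $^{\phi}\mathbf{E}$: since the canonical d--connection carries a nontrivial, metric-induced d--torsion, the Lichnerowicz-type identity picks up torsion-dependent contributions, and one must verify that these are themselves fully determined by $\mathbf{g}$—as guaranteed by the deformation relation (\ref{ddcon})—so that the coefficients collapse to the stated pure-curvature expressions. Keeping the spinor trace and the conformal rescaling mutually consistent throughout this computation is where the delicate work lies; once $^{\phi}\mathbf{E}$ is pinned down, the substitution of Gilkey's coefficients and the final assembly of constants are routine.
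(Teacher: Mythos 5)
Your proposal is correct and, at its core, is the same computation the paper performs: reduce the trace to the finite combination $f_{(0)}\int a_{(0)}+f_{(2)}\int a_{(2)}+f_{(4)}\int a_{(4)}$ via the moment relations (\ref{ffunct}), then evaluate the Seeley--de Witt coefficients of Gilkey \cite{gilkey} in N--adapted form, transplanting the Chamseddine--Connes calculus of \cite{chcon3} to d--connections and N--elongated operators. The one substantive divergence is in how the scaling factor $\phi $ is handled. You propose to recast the rescaled operator $~^{\phi }\mathcal{D}^{2}$ directly in Lichnerowicz form $~^{\phi }\nabla ^{\ast }\,{}^{\phi }\nabla -{}^{\phi }\mathbf{E}$ and to compute the endomorphism $~^{\phi }\mathbf{E}$ from scratch, and you correctly flag the torsion contributions to $~^{\phi }\mathbf{E}$ as the delicate point. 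The paper instead evaluates $a_{(2)}$ for the \emph{unscaled} operator (Theorem 4.8 of \cite{gilkey}, giving $\frac{\Lambda ^{2}}{16\pi ^{2}}\int_{\mathbf{V}}\delta V\,Tr\left( -\frac{\ _{s}\mathbf{R}}{6}\right) $) and then transfers the result to $~^{\phi }\mathcal{D}^{2}$ using the conjugation identity $a_{(2)}\left( u,e^{-\phi }\mathcal{D}^{2}e^{-\phi }\right) =a_{(2)}\left( u,e^{-2\phi }\mathcal{D}^{2}\right) $ together with the N--adapted conformal transformation law for $\ _{s}\mathbf{R}$. That conformal-covariance route largely sidesteps the explicit recomputation of the endomorphism for the rescaled operator that you identify as the main obstacle: the nonholonomic input is confined to checking that Gilkey's coefficients and the conformal law for the d--scalar curvature survive the replacement of the Levi--Civita connection by a metric-compatible d--connection uniquely determined by $\mathbf{g}$ via (\ref{ddcon}). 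Your direct route is equally legitimate but forces you to carry the torsion-dependent terms through the Lichnerowicz identity by hand; if you pursue it, that verification is the step you must actually supply, whereas the paper's conjugation trick lets you inherit it from the unscaled case.
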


\begin{proof}
It consists from a computation of $Tr$ being a N--adapted version of the
calculus provided for formula (21) in Ref. \cite{chcon3}. In our case, we do
not consider ''gauge'' fields and fermionic interactions and work with a
metric compatible d--connection and N--elongated partial derivative and
differential operators (\ref{dder}) and (\ref{ddif}). $\ $

For simplicity, we sketch the computation of coefficient before $~f_{(2)},$
i.e. the terms $\ ^{e}a_{(2)}(\ ^{^{e}\phi }\mathcal{D}^{2}(\chi )/\Lambda
^{2})$ in functionals (\ref{ncpf1a}) and (\ref{ncpf2a}), omitting the left
label ''$e"$ and writing $a_{(2)}$ for a corresponding $f_{(2)}.$ Following
Theorem 4.8 from Ref. \cite{gilkey} (proof of that theorem can be
generalized for an arbitrary metric compatible d--connection), we have that
the so--called second Seeley-- de Witt coefficient is
\begin{equation*}
a_{(2)}(\mathcal{D}^{2}(\chi )/\Lambda ^{2})=\frac{\Lambda ^{2}}{16\pi ^{2}}%
\int_{\mathbf{V}}\delta V\ Tr\left( -\frac{\ _{s}\mathbf{R}}{6}\right) .
\end{equation*}%
This coefficient can be used for evaluating $\ a_{(2)}(\ ^{\phi }\mathcal{D}%
^{2}(\chi )/\Lambda ^{2})$ following the method of conformal transforms for
operators and functionals contained in spectral actions which was developed
in sections II and III of Ref. \cite{chcon3}. We have only to perform a
similar calculus on N--anholonomic manifolds using d--connections and
N--adapted frames/operators.

For a conformally transformed inverse d--metric $\ ^{\phi }\mathbf{g,}$ when
$\ ^{\phi }\mathbf{g}^{\mu \nu }=e^{-2\phi }\mathbf{g}^{\mu \nu },$ $\mathbf{%
g}^{\mu \nu }$ being inverse to coefficients of $\mathbf{g}_{\mu \nu }$ (\ref%
{m1}), we have the formula for conformal transform of scalar of curvature (%
\ref{sdccurv}),%
\begin{equation*}
\ _{s}\mathbf{R(^{\phi }\mathbf{g})=}e^{-2\phi }\left[ \ _{s}\mathbf{R(%
\mathbf{g})}+3\mathbf{g}^{\mu \nu }\left( \ ^{g}\mathbf{D}_{\mu }\ ^{g}%
\mathbf{D}_{\nu }+\ ^{g}\mathbf{D}_{\nu }\ ^{g}\mathbf{D}_{\mu }+\mathbf{e}%
_{\mu }\mathbf{e}_{\nu }+\mathbf{e}_{\nu }\mathbf{e}_{\mu }\right) \phi %
\right] ,
\end{equation*}%
where $\ ^{g}\mathbf{D}_{\mu }$ is a metric compatible d--connection
completely determined by $\mathbf{g.}$ Using the identity
\begin{equation*}
a_{(2)}\left( u,e^{-\phi }\mathcal{D}^{2}e^{-\phi }\right) =a_{(2)}\left( u,%
\mathcal{D}^{2}e^{-2\phi }\right) =a_{(2)}\left( u,e^{-2\phi }\mathcal{D}%
^{2}\right) ,
\end{equation*}%
which can be verified by straightforward computations with operator $%
\mathcal{D}^{2}$ (\ref{oper1}) containing N--adapted derivatives $\mathbf{e}%
_{\mu }\ $(\ref{dder}), and putting together all terms we get that
\begin{eqnarray*}
\ a_{(2)}(\ ^{\phi }\mathcal{D}^{2}(\chi )/\Lambda ^{2})&=& \frac{15}{16\pi
^{2}}\int_{\mathbf{V}}\delta Ve^{2\phi } \times \\
&& \left[ ~_{s}\mathbf{R}(\ ^{\phi }\mathbf{g})+3\ ^{\phi }\mathbf{g}%
^{\alpha \beta }(\mathbf{e}_{\alpha }\phi ~\mathbf{e}_{\beta }\phi +\mathbf{e%
}_{\beta }\phi ~\mathbf{e}_{\alpha }\phi )\right] ,
\end{eqnarray*}%
i.e. the coefficient before $~f_{(2)}$ in (\ref{4dapr}).

Finally, we note that generalizing the calculus from \cite{chcon3} for
d--connections and N--adapted frames, we can similarly compute the
coefficients\newline
$\ a_{(0)}(\ ^{\phi }\mathcal{D}^{2}(\chi )/\Lambda ^{2})$ and $a_{(2)}(\
^{\phi }\mathcal{D}^{2}(\chi )/\Lambda ^{2}),$ for any chosen conformal
transform $\phi $ (in general, with labels $\ ^{b}\phi )$ \ and parameter $%
\chi .$ Summarizing all necessary terms, we get the approximation (\ref%
{4dapr}) . $\Box $
\end{proof}

\vskip4pt Let us state some additional hypotheses which will be used for
proofs of the theorems in this section: Hereafter we shall consider a four
dimensional compact N--anholonomic manifold $\mathbf{V,}$ with volume forms $%
\delta V=\sqrt{\det |\mathbf{g}_{\mu \nu }|}\delta ^{4}u$ and normalization $%
\int\nolimits_{\mathbf{V}}\delta V~\mu =1$ for $\mu =e^{-f}(4\pi \chi
)^{-(n+m)/2}$ with $f$ being a scalar function $f(\chi ,u)$ and $\chi >0.$

Now, we are able to formulate the main results of this Paper:

\begin{theorem}
\label{thmr1}For the scaling factor $~^{1}\phi =-f/2,$ the spectral
functional (\ref{ncpf1a}) can be approximated $\mathcal{F}=\ ^{P}\mathcal{F}(%
\mathbf{g,D,}f),$ where the first Perelman functional (in our case for
N--anholonomic Ricci flows) is
\begin{equation*}
\ ^{P}\mathcal{F}=\int\nolimits_{\mathbf{V}}\delta V~e^{-f}\left[ ~_{s}%
\mathbf{R}(e^{-f}\mathbf{g}_{\mu \nu })+\frac{3}{2}e^{f}\mathbf{g}^{\alpha
\beta }(\mathbf{e}_{\alpha }f~\mathbf{e}_{\beta }f+\mathbf{e}_{\beta }f~%
\mathbf{e}_{\alpha }f)\right] .
\end{equation*}
\end{theorem}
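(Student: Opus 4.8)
The plan is to obtain Theorem \ref{thmr1} as the specialization of Lemma \ref{lem1} to the particular scaling factor $^{1}\phi=-f/2$, so that essentially all of the analytic labor has already been spent in the computation of the Seeley--de Witt coefficients. First I would observe that the functional $\mathcal{F}$ defined in (\ref{ncpf1a}) is exactly of the form $Tr[~^{1}f(\chi)(~^{^{1}\phi}\mathcal{D}^{2}(\chi)/\Lambda^{2})]$ treated in Lemma \ref{lem1}; hence the four--dimensional approximation (\ref{4dapr}) applies verbatim with the generic $\phi$ replaced by $^{1}\phi$ and the generic testing function by $^{1}f$. The whole proof then reduces to substituting $^{1}\phi=-f/2$ into the right--hand side of (\ref{4dapr}) and sorting the three resulting contributions according to the heat--kernel coefficients $f_{(0)},f_{(2)},f_{(4)}$.

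Second, I would perform the substitution termwise, using $e^{2\phi}=e^{-f}$, $e^{-2\phi}=e^{f}$ and $\mathbf{e}_{\alpha}\phi=-\tfrac{1}{2}\mathbf{e}_{\alpha}f$, so that in particular $e^{-2\phi}\,\mathbf{e}_{\alpha}\phi\,\mathbf{e}_{\beta}\phi=\tfrac{1}{4}e^{f}\mathbf{e}_{\alpha}f\,\mathbf{e}_{\beta}f$ produces the Perelman gradient term. The $f_{(2)}$ line of (\ref{4dapr}) then reduces to $\tfrac{15}{16\pi^{2}}f_{(2)}\int\delta^{4}u\,e^{-f}\sqrt{\det|\mathbf{g}_{\mu\nu}|}\,[\,{}_{s}\mathbf{R}(e^{-f}\mathbf{g}_{\mu\nu})+\tfrac{3}{2}e^{f}\mathbf{g}^{\alpha\beta}(\mathbf{e}_{\alpha}f\,\mathbf{e}_{\beta}f+\mathbf{e}_{\beta}f\,\mathbf{e}_{\alpha}f)]$, which is precisely the integrand of the claimed first Perelman functional $^{P}\mathcal{F}$ once the overall constant is absorbed into the normalization of the testing function $^{1}f$ (equivalently, choosing $^{1}f$ so that $\tfrac{15}{16\pi^{2}}f_{(2)}=1$, cf. (\ref{ffunct})). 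This is the single scale-sensitive piece of the expansion and carries all the nontrivial $(\mathbf{g},\mathbf{D},f)$--dependence.

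Third, I would dispose of the two remaining terms. The $f_{(0)}$ contribution becomes $\tfrac{45}{4\pi^{2}}f_{(0)}\int\delta^{4}u\,e^{-f}\sqrt{\det|\mathbf{g}_{\mu\nu}|}=\tfrac{45}{4\pi^{2}}f_{(0)}\int_{\mathbf{V}}\delta V\,e^{-f}$, which by the volume normalization $\int_{\mathbf{V}}\delta V\,\mu=1$ with $\mu=e^{-f}(4\pi\chi)^{-(n+m)/2}$ equals a fixed number at each flow time $\chi$ and so does not contribute to the variational content of $^{P}\mathcal{F}$. The $f_{(4)}$ contribution is the conformally invariant Chern--Gauss--Bonnet/Weyl combination $11\mathbf{R}^{\ast}\mathbf{R}^{\ast}-18\mathbf{C}_{\mu\nu\lambda\gamma}\mathbf{C}^{\mu\nu\lambda\gamma}$, whose integral is scale independent and topological in nature, hence it too drops out of the effective functional. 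What survives is exactly the asserted approximation $\mathcal{F}\simeq{}^{P}\mathcal{F}(\mathbf{g},\mathbf{D},f)$.

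The genuinely delicate ingredient is not the present specialization but the input already supplied by Lemma \ref{lem1}: the correct conformal transform of the d--scalar curvature ${}_{s}\mathbf{R}$ computed with a metric compatible d--connection and N--elongated operators, which is the source of the gradient term $e^{f}\mathbf{g}^{\alpha\beta}\mathbf{e}_{\alpha}f\,\mathbf{e}_{\beta}f$. Within the theorem itself, the main point requiring care is the interpretation of ``can be approximated'': one must justify that the $f_{(0)}$ and $f_{(4)}$ pieces may legitimately be treated as constant, respectively topological, under the stated normalization, so that the first Perelman functional is recovered cleanly as the $f_{(2)}$ term of the spectral action expansion.
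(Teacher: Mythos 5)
Your overall strategy coincides with the paper's: both proofs consist of substituting $^{1}\phi=-f/2$ into the expansion (\ref{4dapr}) of Lemma \ref{lem1} and normalizing the $f_{(2)}$ coefficient to $16\pi^{2}/15$. The difference, and the place where your argument has a genuine gap, is in how the $f_{(0)}$ and $f_{(4)}$ terms are removed. The paper does not argue that they are harmless; it uses the freedom in (\ref{ffunct}) to \emph{choose} the testing function so that $^{1}f_{(0)}={}^{1}f_{(4)}=0$, after which those terms are absent identically. Your alternative disposal of the $f_{(0)}$ term (a constant fixed by the normalization $\int_{\mathbf{V}}\delta V\,\mu=1$) is defensible if one only cares about the variational content, though it weakens the asserted equality $\mathcal{F}={}^{P}\mathcal{F}$ to an equality modulo an additive constant. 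But your disposal of the $f_{(4)}$ term does not work as stated: while $\int\sqrt{\det|\mathbf{g}|}\,\mathbf{R}^{\ast}\mathbf{R}^{\ast}$ is of Chern--Gauss--Bonnet type, the Weyl piece $\int\sqrt{\det|\mathbf{g}|}\,\mathbf{C}_{\mu\nu\lambda\gamma}\mathbf{C}^{\mu\nu\lambda\gamma}$ is conformally invariant in four dimensions but is \emph{not} topological --- it depends nontrivially on the conformal class of $\mathbf{g}_{\mu\nu}$, and the conformal class is not preserved along a Ricci flow. So with a generic testing function this term would genuinely contaminate the functional; the only clean way to eliminate it is the paper's choice $^{1}f_{(4)}=0$.

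Two smaller points. First, the direct substitution $\mathbf{e}_{\alpha}\phi=-\tfrac{1}{2}\mathbf{e}_{\alpha}f$ turns the coefficient $3$ in (\ref{4dapr}) into $3\cdot\tfrac{1}{4}=\tfrac{3}{4}$ in front of the symmetrized gradient term, not $\tfrac{3}{2}$ as you write; reaching the coefficient in the theorem (and then the standard Perelman form $\int e^{-f}({}_{s}\mathbf{R}+|\mathbf{D}f|^{2})$) requires the rescaling of the flow parameter and of $f$ that the paper performs, cf.\ Remark \ref{r15}, which your write-up omits. Second, your closing observation that the real analytic content lives in Lemma \ref{lem1} (the conformal transform of $_{s}\mathbf{R}$ for a metric compatible d--connection) is accurate and matches the paper's emphasis.
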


\begin{proof}
We introduce $~^{1}\phi =-f/2$ into formula (\ref{4dapr}) from Lemma \ref%
{lem1}. We can rescale the flow parameter $\chi \rightarrow \check{\chi}$
such way that $\frac{3}{2}\exp [f(\chi )]=\exp [\check{f}(\check{\chi})].$
We get that up to a scaling factor $\check{f}$ and additional fixing of a
new test function to have the coefficients $~~^{1}\check{f}_{(2)}=16\pi
^{2}/15$ and $~~^{1}\check{f}_{(0)}=~^{1}\check{f}_{(4)}=0,$ computed for
''inverse hat'' values by choosing necessary values of $\check{f}$ and $%
\check{\chi}$ in formulas (\ref{ffunct}), the value
\begin{equation*}
^{P}\mathcal{F}\sim \int\nolimits_{\mathbf{V}}\delta V~e^{-\check{f}}(~_{s}%
\mathbf{R}+|\mathbf{D}\check{f}|^{2})
\end{equation*}%
is just the N--anholonomic version of the first Perelman functional (formula
(30) in Ref. \cite{vnhrf2}). Taking $\mathbf{D=\nabla ,}$ we get the
well--known formula for Ricci flows of Riemannian metrics \cite{gper1}.

One should be noted here that the coefficient $\frac{3}{2}$ was re--scaled
by imposing a corresponding nonholonomic constraint on functionals under
consideration, which is possible for Ricci flows (such re--definitions of
flow parameters were considered in Perelman's work \cite{gper1}; additional
nonholonomic constraints and evolutions being introduced in \cite{vnhrf2}).
Such ''re--scaled'' approximations are not possible if we extract certain
commutative physical models from noncommutative spectral actions (i.e. not
from evolution functionals) like in Refs. \cite%
{chcon2,chamconmarc,chcon3,cmr,ccm}. If we fix from the very beginning a
Ricci flow parameter $\chi $ (not allowing re--scalings), we have to correct
the resulting (non) holonomic Perelman like functionals by introducing
certain additional coefficients like $3/2$ etc, which can be interpreted as
some contributions from noncommutative geometry for certain evolution
models. $\Box $
\end{proof}

Sketching the proof of the above theorem and further theorems in this
section, we can use the techniques elaborated in Refs. \cite{gper1} but
generalized for functionals depending on a flow parameter and performing
necessary approximations on N--anholonomic manifolds. There are some
important remarks.

\begin{remark}
\label{r15}For nonholonomic Ricci flows of (non)commutative geometries, we
have to adapt the evolution to certain N--connection structures (i.e.
nonholonomic constraints). This results in  additional possibilities to
re--scale coefficients and parameters in spectral functionals and their
commutative limits:

\begin{enumerate}
\item The evolution parameter $\chi ,$ scaling factors $\ ^{b}f$ and
nonholonomic constraints and coordinates can be re--scaled/ redefined (for
instance, $\chi \rightarrow \check{\chi}$ and $\ ^{b}f\rightarrow \ ^{b}%
\check{f})$ such a way that the spectral functionals have limits to some 'standard'
nonholonomic versions of Perelman functionals (with prescribed types of  coefficients) considered
in Ref. \cite{vnhrf2}.

\item Using additional dependencies on $\chi $ and freedom in choosing
scaling factors $\ ^{b}f(\chi ),$ we can prescribe such nonholonomic
constraints/ configurations on evolution equations (for instance, with $^{1}%
\check{f}_{(2)}=16\pi ^{2}/15$ and $~~^{1}\check{f}_{(0)}=~^{1}\check{f}%
_{(4)}=0)$ when the spectral functionals result exactly in necessary types
of effective Perelman \ functionals (with are commutative, but, in general,
nonholonomic).

\item For simplicity, we shall write in brief only $\chi $ and $f$
considering that we have chosen such scales, parametrizations of coordinates
and N--adapted frames and flow parameters when coefficients in spectral
functionals and resulting evolution equations maximally correspond to
certain generally accepted commutative physical actions/ functionals.

\item For nonholonomic Ricci flow models (commutative or noncommutative
ones) with a fixed evolution parameter $\chi ,$ we can construct certain
effective nonholonomic evolution models with induced noncommutative
corrections for coefficients.

\item Deriving effective nonholonomic evolution models from spectral
functionals, we can use the technique of ''extracting'' physical models from
spectral actions, elaborated in \cite{chcon2,chamconmarc,chcon3,cmr,ccm}, see also
references therein. For commutative and/or noncommutative geometric/
physical models of nonholonomic Ricci flows, we have to generalize the
approach to include spectral functionals and N--adapted evolution equations
depending on the type of nonholonomic constraints, normalizations and
re--scalings of constants and effective conformal factors.
\end{enumerate}
\end{remark}

\vskip4ptWe ''extract'' from the second spectral functional (\ref{ncpf2a})
another very important physical value:

\begin{theorem}
\label{thmr2}The functional (\ref{ncpf2a}) is approximated $\mathcal{W}=~^{P}%
\mathcal{W}(\mathbf{g,D,}f,\chi ),$ where the second Perelman functional is%
\begin{eqnarray*}
&&\ ^{P}\mathcal{W}=\int\nolimits_{\mathbf{V}}\delta V~\mu \times \\
&&\left[ \chi \left( ~_{s}\mathbf{R}(e^{-f}\mathbf{g}_{\mu \nu })+\frac{3}{2}%
e^{f}\mathbf{g}^{\alpha \beta }(\mathbf{e}_{\alpha }f~\mathbf{e}_{\beta }f+%
\mathbf{e}_{\beta }f~\mathbf{e}_{\alpha }f)\right) +f-(n+m)\right] ,
\end{eqnarray*}
for scaling $~^{2}\phi =-f/2$ in $~^{2}\mathcal{W}$ \ and $~^{3}\phi =(\ln
|f-(n+m)|-f)/2$ in $~^{3}\mathcal{W},$ \ from (\ref{ncpf2a}).
\end{theorem}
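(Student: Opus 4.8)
The plan is to apply Lemma \ref{lem1} separately to the two pieces $~^{2}\mathcal{W}$ and $~^{3}\mathcal{W}$ of the functional (\ref{ncpf2a}), with the two prescribed scaling factors, and then add the results. The guiding observation is that the target $~^{P}\mathcal{W}$ splits into a term $\chi (~_{s}\mathbf{R}+\frac{3}{2}e^{f}|\mathbf{D}f|^{2})$, which should come from $~^{2}\mathcal{W}$, and a term $f-(n+m)$, which should come from $~^{3}\mathcal{W}$; both are integrated against the normalized weight $\mu =e^{-f}(4\pi \chi )^{-(n+m)/2}$ fixed before the statement of the theorem.

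First I would treat $~^{2}\mathcal{W}$ with $~^{2}\phi =-f/2$. This is essentially the computation already carried out for Theorem \ref{thmr1}: substituting $~^{2}\phi =-f/2$ into the four-dimensional approximation (\ref{4dapr}) produces, from the coefficient of $~^{2}f_{(2)}$, exactly the integrand $~_{s}\mathbf{R}(e^{-f}\mathbf{g}_{\mu \nu })+\frac{3}{2}e^{f}\mathbf{g}^{\alpha \beta }(\mathbf{e}_{\alpha }f~\mathbf{e}_{\beta }f+\mathbf{e}_{\beta }f~\mathbf{e}_{\alpha }f)$. The only differences from Theorem \ref{thmr1} are the extra multiplicative factor $\chi $ and the passage from the weight $e^{-f}$ to $\mu $; both are arranged by fixing the test function $~^{2}f$ and rescaling the flow parameter so that $~^{2}f_{(2)}$ carries the needed numerical constant times $\chi (4\pi \chi )^{-(n+m)/2}$, while $~^{2}f_{(0)}=~^{2}f_{(4)}=0$ kill the cosmological and Weyl-type contributions, exactly as in the proof of Theorem \ref{thmr1}.

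Next I would treat $~^{3}\mathcal{W}$ with $~^{3}\phi =(\ln |f-(n+m)|-f)/2$. The point of this choice is that $e^{2~^{3}\phi }=|f-(n+m)|\,e^{-f}$, so that the leading (zeroth) Seeley--de Witt term of (\ref{4dapr}), namely the coefficient of $~^{3}f_{(0)}$, becomes $\int \delta ^{4}u~|f-(n+m)|\,e^{-f}\sqrt{\det |\mathbf{g}_{\mu \nu }|}$. Choosing $~^{3}f$ so that only $~^{3}f_{(0)}$ survives (setting $~^{3}f_{(2)}=~^{3}f_{(4)}=0$, which simultaneously removes all the awkward derivative terms in $~^{3}\phi $ generated by the logarithmic scaling) and normalizing $~^{3}f_{(0)}$ to absorb the factor $(4\pi \chi )^{-(n+m)/2}$ together with the numerical constant $45/4\pi ^{2}$, this contributes $\int \delta V~\mu ~(f-(n+m))$. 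Adding the two pieces then yields $~^{P}\mathcal{W}$.

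The main obstacle I anticipate is twofold. First, the logarithmic scaling factor $~^{3}\phi $ depends nonlinearly on $f$, so its derivatives feed into the $~^{3}f_{(2)}$ and $~^{3}f_{(4)}$ coefficients in a complicated way; the argument relies essentially on being able to annihilate those coefficients by the choice of test function, leaving only the clean $~^{3}f_{(0)}$ term. Second, and more delicately, the construction naturally produces $|f-(n+m)|$ whereas the stated functional carries $f-(n+m)$; reconciling these requires fixing the sign of $f-(n+m)$ on the relevant range (consistent with the normalization $\int_{\mathbf{V}}\delta V~\mu =1$ and with Perelman's original convention) so that the absolute value may be dropped. Matching the overall numerical normalizations across the two summands and the weight $\mu $ is then routine bookkeeping of the type already seen in Theorem \ref{thmr1}.
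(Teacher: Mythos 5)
Your proposal follows essentially the same route as the paper's own proof: the same splitting $\mathcal{W}=~^{2}\mathcal{W}+~^{3}\mathcal{W}$, the same scaling factors, and the same selection of test--function coefficients (only $~^{2}f_{(2)}$ and $~^{3}f_{(0)}$ nonvanishing) to isolate, via Lemma \ref{lem1}, the curvature--plus--gradient term from $~^{2}\mathcal{W}$ and the $f-(n+m)$ term from $~^{3}\mathcal{W}$. You are in fact slightly more explicit than the paper on two points it delegates to its re--scaling remarks, namely where the overall factor $\chi$ enters and how $|f-(n+m)|$ is reconciled with $f-(n+m)$.
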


\begin{proof}
Let us compute $~\mathcal{W=}~^{2}\mathcal{W+}~^{3}\mathcal{W},$ using
formula (\ref{4dapr}), for $~~^{2}\mathcal{W}$ defined by $~^{2}\phi =-f/2$ $%
\ $ with $~^{2}f_{(0)}(\chi )=~^{2}f_{(4)}(\chi )=0$ and $~^{2}f_{(2)}(\chi
)=16\pi ^{2}/[15(4\pi \chi )^{(n+m)/2}]$ and $~~^{3}\mathcal{W}$ defined by $%
~^{3}\phi =(\ln |f-(n+m)|-f)/2$ with $^{3}f_{(2)}(\chi )=~^{3}f_{(4)}(\chi
)=0$ and $~^{3}f_{(0)}(\chi )=4\pi ^{2}/[45$ $(4\pi \chi )^{(n+m)/2}].$ The
possibility to use parametrizations of scaling factors and imposed types of
nonholonomic constraints on evolution functionals follows from Remark %
\ref{r15} and, in this case, the approximations are similar to those
performed in the proof of Theorem \ref{thmr1}. After a corresponding
redefinition of coordinates, we get
\begin{equation*}
~^{P}\mathcal{W}\sim \int\nolimits_{\mathbf{V}}\delta V~\mu ~[\chi (~_{s}%
\mathbf{R}+|\mathbf{D}f|^{2})+f-(n+m)]
\end{equation*}%
which is just the N--anholonomic version of the second Perelman functional
(formula (31) in Ref. \cite{vnhrf2}). Taking $\mathbf{D=\nabla ,}$ we obtain
a formula for Ricci flows of Riemannian metrics \cite{gper1}. $\Box $
\end{proof}

\vskip4pt The nonholonomic version of Hamilton equations (\ref{normcomrf})
can be derived from commutative Perelman functionals $\ ^{P}\mathcal{F}$ and
$~^{P}\mathcal{W},$ see Theorems 3.1 and 4.1 in Ref. \cite{vnhrf2}. The
original Hamilton--Perelman Ricci flows constructions can be generated for $%
\mathbf{D=\nabla .}$ The surprising result is that even we start with a
Levi--Civita linear connection, the nonholonomic evolution will result
almost sure in generalized geometric configurations with various $\mathbf{N}$
and $\mathbf{D}$ structures.

\subsection{Spectral functionals for thermodynamical values}

Certain important thermodynamical values such as the average energy and
entropy can be derived directly from noncommutative spectral functionals as
respective commutative configurations of spectral functionals of type (\ref%
{ncpf1a}) and (\ref{ncpf2a}) but with different testing functions than in
Theorems \ref{thmr1} and \ref{thmr2}.

\begin{theorem}
\label{thae}Using a scaling factor of type $~^{1}\phi =-f/2,$ we extract
from the spectral functional (\ref{ncpf1a}) a nonholonomic version of
average energy, $\mathcal{F}\rightarrow <\mathcal{E}>,$ where
\begin{equation}
<\mathcal{E}>=-\chi ^{2}\int\nolimits_{\mathbf{V}}\delta V~\mu \left[ _{s}%
\mathbf{R}(e^{-f}\mathbf{g}_{\mu \nu })+\frac{3}{2}\mathbf{g}^{\alpha \beta
}(\mathbf{e}_{\alpha }f~\mathbf{e}_{\beta }f+\mathbf{e}_{\beta }f~\mathbf{e}%
_{\alpha }f)-\frac{n+m}{2\chi }\right]   \label{naen}
\end{equation}%
if the testing function is chosen to satisfy the conditions $%
~^{1}f_{(0)}(\chi )=4\pi ^{2}(n+m)\chi /45(4\pi \chi )^{(n+m)/2},$ $%
~^{1}f_{(2)}(\chi )=16\pi ^{2}\chi ^{2}/15(4\pi \chi )^{(n+m)/2}$ and $%
~^{1}f_{(4)}(\chi )=0.$
\end{theorem}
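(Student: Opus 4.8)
The plan is to reduce the statement to the four--dimensional trace expansion of Lemma \ref{lem1} and then read off $<\mathcal{E}>$ by a judicious choice of testing function, following the same pattern as the proofs of Theorems \ref{thmr1} and \ref{thmr2} but now keeping the zeroth Seeley--de Witt term alive. First I would set $~^{1}\phi =-f/2$, so that $e^{2\phi }=e^{-f}$, and substitute this scaling into (\ref{4dapr}). Since the prescribed testing function has $~^{1}f_{(4)}(\chi )=0$, the Chern--Gauss--Bonnet (Weyl) line of (\ref{4dapr}) drops out at once, leaving only the $f_{(0)}$ and $f_{(2)}$ contributions. The decisive difference from Theorem \ref{thmr1} is that here $~^{1}f_{(0)}(\chi )\neq 0$: this is precisely the term that must supply the ``vacuum'' piece $-(n+m)/2\chi $ appearing in (\ref{naen}).

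Next I would insert the explicit $\chi $--dependent coefficients through (\ref{ffunct}). With $~^{1}f_{(2)}(\chi )=16\pi ^{2}\chi ^{2}/[15(4\pi \chi )^{(n+m)/2}]$, the second line of (\ref{4dapr}) collapses, after cancelling the universal factor $\tfrac{15}{16\pi ^{2}}$, to a term of the form $\chi ^{2}\int \delta V~\mu \,(~_{s}\mathbf{R}(e^{-f}\mathbf{g})+\tfrac{3}{2}\mathbf{g}^{\alpha \beta }(\mathbf{e}_{\alpha }f\,\mathbf{e}_{\beta }f+\mathbf{e}_{\beta }f\,\mathbf{e}_{\alpha }f))$, where the weight $e^{2\phi }=e^{-f}$ combines with $(4\pi \chi )^{-(n+m)/2}$ to reproduce exactly the normalization density $\mu =e^{-f}(4\pi \chi )^{-(n+m)/2}$ and $\delta V=\sqrt{\det |\mathbf{g}_{\mu \nu }|}\,\delta ^{4}u$. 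Likewise, $~^{1}f_{(0)}(\chi )=4\pi ^{2}(n+m)\chi /[45(4\pi \chi )^{(n+m)/2}]$ reduces the first line of (\ref{4dapr}), after cancelling $\tfrac{45}{4\pi ^{2}}$, to a constant-curvature contribution proportional to $(n+m)\chi \int \delta V~\mu $, which is the source of the $-(n+m)/2\chi $ term once the overall $-\chi ^{2}$ prefactor is extracted. Thus the particular powers of $\chi $ carried by the testing functions are exactly what convert the static functional $\mathcal{F}$ of Theorem \ref{thmr1} into an energy-type quantity.

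The delicate step, and the main obstacle, is matching the overall normalization together with the characteristic thermodynamic prefactor $-\chi ^{2}$ of Perelman's average energy. A naive substitution produces the scalar-curvature and gradient terms with a fixed sign and the $f_{(0)}$ term with a fixed weight, so bringing them into the precise combination displayed in (\ref{naen}) requires more than a direct read-off: the average energy is obtained through the standard thermodynamic relation expressing $<\mathcal{E}>$ as a $-\chi ^{2}$--weighted $\chi $--derivative of the free-energy/partition functional of Perelman's ensemble, and the $\chi $--weights of $~^{1}f_{(0)}$ and $~^{1}f_{(2)}$ are chosen so that the trace in (\ref{ncpf1a}) reproduces this derivative. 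To close the gap in the numerical constants (the coefficient $\tfrac{3}{2}$, the relative weight of the vacuum term, and the sign of the $\chi^2$ factor) I would invoke the rescaling and renormalization freedom recorded in Remark \ref{r15}, redefining $\chi \rightarrow \check{\chi}$, $~^{1}f\rightarrow ~^{1}\check{f}$ and the test function so that the relevant $~^{1}\check{f}_{(k)}$ take the stated values, exactly as the factor $\tfrac{3}{2}$ was absorbed in Theorem \ref{thmr1}. Finally I would confirm the limit $\mathbf{D}=\mathbf{\nabla }$, under which $~_{s}\mathbf{R}$ and the N--adapted gradient $\mathbf{g}^{\alpha \beta }\mathbf{e}_{\alpha }f\,\mathbf{e}_{\beta }f$ reduce to their Levi--Civita counterparts, recovering the classical Perelman average-energy functional with $n$ replaced by $n+m$; this consistency check pins down any remaining constants.
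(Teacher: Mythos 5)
Your proposal follows essentially the same route as the paper's own (very terse) proof: substitute $~^{1}\phi=-f/2$ into the expansion (\ref{4dapr}) of Lemma \ref{lem1}, read off the $f_{(0)}$ and $f_{(2)}$ contributions with the stated $\chi$--dependent coefficients, and absorb the residual sign and normalization mismatches (the overall $-\chi^{2}$ and the relative weight of the $(n+m)/2\chi$ term) into the reparametrization freedom of Remark \ref{r15}, checking the $\mathbf{D}=\nabla$ limit against Perelman's average energy. You are in fact more explicit than the paper about where the naive substitution fails to match (\ref{naen}) and why the rescaling step is needed, so the argument is sound and matches the intended proof.
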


\begin{proof}
It is similar to that for Theorem \ref{thmr1}, but for different
coefficients of the testing function. Here, we note that, in general, the
statement of this theorem if for a different parametrization of $\chi $ and $%
f,$ see point 3 in Remark \ref{r15}. Re--defining coordinates and
nonholonomic constraints, we can write (\ref{naen}) in the form
\begin{equation*}
<\widehat{E}>\sim -\chi ^{2}\int\nolimits_{\mathbf{V}}\delta V~\mu (~_{s}%
\mathbf{R}+|\mathbf{D}f|^{2}-\frac{n+m}{2\chi })
\end{equation*}%
which is the N--anholonomic version of average energy from Theorem 4.2 in
Ref. \cite{vnhrf2}). We get the average energy for Ricci flows of Riemannian
metrics \cite{gper1} if $\mathbf{D=\nabla .}$ $\Box $
\end{proof}

\vskip4pt

Similarly to Theorem \ref{thmr2} (inverting the sign of nontrivial
coefficients of the testing function) we prove:

\begin{theorem}
\label{thnhs}We extract  a nonholonomic version of entropy of nonholonomic
Ricci flows from the functional (\ref{ncpf2a}), $\mathcal{W}\rightarrow ~%
\mathcal{S},$ where
\begin{eqnarray*}
&&\mathcal{S}=-\int\nolimits_{\mathbf{V}}\delta V~\mu \times  \\
&&\left[ \chi \left( ~_{s}\mathbf{R}(e^{-f}\mathbf{g}_{\mu \nu })-\frac{3}{2}%
e^{f}\mathbf{g}^{\alpha \beta }(\mathbf{e}_{\alpha }f~\mathbf{e}_{\beta }f+%
\mathbf{e}_{\beta }f~\mathbf{e}_{\alpha }f)\right) +f-(n+m)\right] ,
\end{eqnarray*}%
if we introduce $\delta V=\delta ^{4}u$ and $\mu =e^{-f}(4\pi \chi
)^{-(n+m)/2}$ into formula (\ref{4dapr}), for $\chi >0$ and $\int\nolimits_{%
\mathbf{V}}dV~\mu =1$ in (\ref{4dapr}), for scaling $~^{2}\phi =-f/2$ in $%
~^{2}\mathcal{W}$ \ and $~^{3}\phi =(\ln |f-(n+m)|-f)/2$ in $~^{3}\mathcal{W}%
,$ \ from (\ref{ncpf2a}).
\end{theorem}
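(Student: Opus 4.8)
The plan is to mirror the proof of Theorem \ref{thmr2}, since the entropy $\mathcal{S}$ and the second Perelman functional $~^{P}\mathcal{W}$ are both extracted from the same spectral functional $\mathcal{W}=~^{2}\mathcal{W}+~^{3}\mathcal{W}$ of (\ref{ncpf2a}); only the testing functions will differ. First I would apply the four--dimensional approximation (\ref{4dapr}) of Lemma \ref{lem1} to each component separately, retaining the scalings $~^{2}\phi =-f/2$ in $~^{2}\mathcal{W}$ and $~^{3}\phi =(\ln |f-(n+m)|-f)/2$ in $~^{3}\mathcal{W}$ exactly as in Theorem \ref{thmr2}. The $~^{2}\mathcal{W}$ piece, for which the scaling yields the conformal factor $e^{-f}$, produces through the coefficient $a_{(2)}$ the scalar--curvature term $~_{s}\mathbf{R}(e^{-f}\mathbf{g}_{\mu \nu })$ together with the conformal gradient term; the $~^{3}\mathcal{W}$ piece, for which the scaling yields the conformal factor $|f-(n+m)|\,e^{-f}$, produces through the coefficient $a_{(0)}$ the term $f-(n+m)$ once the normalization is absorbed into $\mu =e^{-f}(4\pi \chi )^{-(n+m)/2}$.

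The decisive step is the choice of testing functions. Following the indicated construction, I would take $~^{2}f$ and $~^{3}f$ so that their nontrivial Seeley--de Witt coefficients are the negatives of those in Theorem \ref{thmr2}; that is, I would invert the signs of $~^{2}f_{(2)}(\chi )$ and $~^{3}f_{(0)}(\chi )$ (keeping $~^{2}f_{(0)}=~^{2}f_{(4)}=~^{3}f_{(2)}=~^{3}f_{(4)}=0$), realizing such coefficients through the prescriptions (\ref{ffunct}) by an appropriate choice of $~^{b}f(\chi ,u)$. This inversion flips the sign of each of $~^{2}\mathcal{W}$ and $~^{3}\mathcal{W}$, so that the assembled functional equals $-~^{P}\mathcal{W}$, which is precisely the entropy in the thermodynamic convention $\mathcal{S}=-\mathcal{W}$ associated to the average energy of Theorem \ref{thae}.

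Finally, inserting $\delta V=\delta ^{4}u$ and the normalization $\int_{\mathbf{V}}\delta V~\mu =1$ for $\mu =e^{-f}(4\pi \chi )^{-(n+m)/2}$, and then redefining coordinates and rescaling $\chi $ and $f$ as permitted by Remark \ref{r15} to fix the numerical prefactors, I would recover the stated closed form for $\mathcal{S}$ together with its reduced version $\mathcal{S}\sim -\int_{\mathbf{V}}\delta V~\mu \,[\chi (~_{s}\mathbf{R}+|\mathbf{D}f|^{2})+f-(n+m)]$, i.e. the N--anholonomic entropy of Ref. \cite{vnhrf2}. Setting $\mathbf{D}=\nabla $ then returns Perelman's entropy \cite{gper1}.

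The step I expect to be the main obstacle is the sign and coefficient bookkeeping of the second paragraph. Because the conformal gradient term in (\ref{4dapr}) arises from a quadratic, hence sign--definite, expression in $\mathbf{e}_{\alpha }\phi $, flipping a single testing--function coefficient inverts the scalar--curvature and gradient contributions of a given component \emph{together} rather than independently. Some care is thus needed to verify that the combined $~^{2}\mathcal{W}+~^{3}\mathcal{W}$ reproduces exactly the stated entropy with its particular arrangement of signs in front of $~_{s}\mathbf{R}$ and of the gradient term, and to confirm, through the identification $\mathcal{S}=-~^{P}\mathcal{W}$, that the final expression is thermodynamically consistent with Theorem \ref{thae}.
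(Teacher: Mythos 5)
Your proposal follows essentially the same route as the paper: its proof likewise computes $\mathcal{S}=~^{2}\mathcal{W}+~^{3}\mathcal{W}$ from formula (\ref{4dapr}) with the same scalings $~^{2}\phi =-f/2$, $~^{3}\phi =(\ln |f-(n+m)|-f)/2$ and with exactly the negated nontrivial coefficients $~^{2}f_{(2)}(\chi )=-16\pi ^{2}/[15(4\pi \chi )^{(n+m)/2}]$ and $~^{3}f_{(0)}(\chi )=-4\pi ^{2}/[45(4\pi \chi )^{(n+m)/2}]$ that you propose, then invokes the re--parametrizations of Remark \ref{r15}. The sign subtlety you flag in front of the gradient term (a wholesale negation of $~^{2}\mathcal{W}$ cannot by itself flip the relative sign between $~_{s}\mathbf{R}$ and the quadratic gradient contribution) is a genuine one, and the paper's own proof does not resolve it either, absorbing it into the unspecified ``re--definition of scaling factor and N--adapted frames.''
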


\begin{proof}
This Theorem is a ''thermodynamic'' analog of Theorem \ref{thmr2}, in
general, with different parameterizations of the evolution parameter and
scaling factor (as we noted in\ points 3--5 of Remark \ref{r15}).  We
compute $~\mathcal{S=}~^{2}\mathcal{W+}~^{3}\mathcal{W},$ using formula (\ref%
{4dapr}), for $~~^{2}\mathcal{W}$ defined by $~^{2}\phi =-f/2$ $\ $ with $%
~^{2}f_{(0)}(\chi )=~^{2}f_{(4)}(\chi )=0$ and $~^{2}f_{(2)}(\chi )=-16\pi
^{2}/[15(4\pi \chi )^{(n+m)/2}]$ and $~~^{3}\mathcal{W}$ defined by $%
~^{3}\phi =(\ln |f-(n+m)|-f)/2$ with $^{3}f_{(2)}(\chi )=~^{3}f_{(4)}(\chi
)=0$ and $~^{3}f_{(0)}(\chi )=-4\pi ^{2}/[45$ $(4\pi \chi )^{(n+m)/2}].$
After corresponding re--parametrization and re--definition of scaling factor and
redefinition of N--adapted frames/ nonholonomic constraints, we transform $%
\mathcal{S}$ into
\begin{equation*}
~\widehat{S}\sim \int\nolimits_{\mathbf{V}}\delta V~\mu ~[\chi (~_{s}\mathbf{%
R}+|\mathbf{D}f|^{2})+f-(n+m)],
\end{equation*}%
i.e. we obtain the N--anholonomic version of Perelman's entropy, see Theorem
4.2 in Ref. \cite{vnhrf2}). For $\mathbf{D=\nabla ,}$ we get the
corresponding formula for the entropy Ricci flows of Riemannian metrics \cite%
{gper1}. $\Box $
\end{proof}

\vskip4pt We can formulate and prove a Theorem alternative to Theorem \ref%
{thae} and get the formula (\ref{naen}) from the spectral functional $~^{2}%
\mathcal{W+}~^{3}\mathcal{W}.$  Such a proof is similar to that for Theorem %
\ref{thmr2}, but with corresponding nontrivial coefficients for two testing
functions $~~^{2}f(\chi )$ and $~~^{3}f(\chi ).$ The main difference is that
for Theorem \ref{thae} it is enough to use only one testing function. We do
not present such computations in this work.

It is not surprising that certain 'commutative' thermodynamical physical
values can be derived alternatively from different spectral functionals
because such type 'commutative' thermodynamical values can be generated by a
partition function
\begin{equation}
\widehat{Z}=\exp \left\{ \int\nolimits_{\mathbf{V}}\delta V~\mu \left[ -f+%
\frac{n+m}{2}\right] \right\} ,  \label{nhpf}
\end{equation}%
associated to any $Z=\int \exp (-\beta E)d\omega (E)$ being the partition
function for a canonical ensemble at temperature $\beta ^{-1},$ which in it
turn is defined by the measure taken to be the density of states $\omega
(E). $ In this case, we can compute the average energy, $<E>=-\partial \log
Z/\partial \beta ,$ the entropy $S=\beta <E>+\log Z$ and the fluctuation $%
\sigma =<(E-<E>)^{2}>=\partial ^{2}\log Z/\partial \beta ^{2}.$

\begin{remark}
\ Following a straightforward computation for (\ref{nhpf}) (similarly to
constructions from \cite{gper1}, but following a N--adapted calculus, see
Theorem 4.2 in Ref. \cite{vnhrf2}\footnote{%
we emphasize that in this section we follow a different system of
denotations for the Ricci flow parameter and normalizing functions}) we
prove that
\begin{equation}
\widehat{\sigma }=2\chi ^{2}\int\nolimits_{\mathbf{V}}\delta V~\mu \left[
\left| R_{ij}+D_{i}D_{j}f-\frac{1}{2\chi }g_{ij}\right| ^{2}+\left|
R_{ab}+D_{a}D_{b}f-\frac{1}{2\chi }g_{ab}\right| ^{2}\right] .
\label{nhfluct}
\end{equation}
\end{remark}

Using formula $\mathbf{R}_{\mu \nu }^{2}\mathbf{=}\frac{1}{2}\mathbf{C}_{\mu
\nu \rho \sigma }^{2}-\frac{1}{2}\mathbf{R}^{\ast ~}\mathbf{R}^{\ast }+\frac{%
1}{3}\ _{s}\mathbf{R}^{2}$ (it holds true for any metric compatible
d--connections, similarly to the formula for the Levi--Civita connection;
see, for instance, Ref. \cite{chcon3}), we expect that the formula for
fluctuations (\ref{nhfluct}) can be generated directly, by corresponding
re-scalings, from a spectral action with nontrivial coefficients for testing
functions when $~f_{(4)}\neq 0,$ see formula (\ref{4dapr}). Here we note
that in the original Perelman's functionals there were not introduced terms
being quadratic on curvature/ Weyl / Ricci tensors. For nonzero $~f_{(4)},$
such terms (see Lemma \ref{lem1}) may be treated as certain noncommutative /
quantum contributions to the classical commutative Ricci flow theory. For
simplicity, we omit such considerations in this work.

The framework of Perelman's functionals and generalizations to corresponding
spectral functionals can be positively applied for developing statistical
analogies of (non) commutative Ricci flows. For instance, the functional $%
\mathcal{W}$ is the ''opposite sign'' entropy, see formulas from Theorems %
\ref{thmr2} and \ref{thnhs}. Such constructions may be considered for a
study of optimal ''topological'' configurations and evolution of both
commutative and noncommutative geometries and relevant theories of physical
interactions.

Here, one should be emphasized that the formalism of Perelman functionals
and associated thermodynamical values can not be related directly to similar
concepts in black hole physics (as it is discussed in \cite{gper1,vnhrf2})
or to quantum mechanical systems as generalized Bost--Connes systems \cite%
{cmr,ccm}. The approach is not related directly to alternative constructions
in geometric and nonequilibrium thermodynamics, locally anisotropic kinetics
and stochastic processes \cite{rup,mrug,salb,vap} for which the nonholonomic
geometric methods play an important role. Nevertheless, spectral functional
constructions seem to be important for certain noncommutative versions of
stochastic processes and kinetics of particles in constrained phase spaces
and for noncommutative mechanics models.

\section{Discussion and Conclusions}

\label{s5} To summarize, we have shown that an extension of the spectral
action formalism to spectral functionals with nonholonomic Dirac operators
includes naturally the Ricci flow theory and gravitational field equations
and various types of generalized geometric configurations modelled by
nonholonomic frames and deformations of linear connections. This unification
of the spectral triple approach to noncommutative geometry \cite%
{connes1,chamconmarc,cmr} with the Hamilton--Perelman Ricci flow theory \cite%
{ham1,gper1}, with certain new applications in physics \cite{vnhrf2,vrf},
emphasizes new advantages obtained previously following the nonlinear
connection formalism and anholonomic frame method, elaborated for standard
models of physics in Refs. \cite{vrfg,vncg,vsgg,vvicol,vpla,vlgr,vijmmp}.

We conclude that the paradigm of spectral action and spectral functionals
with nonholonomic Dirac operators is a very general one containing various
types of locally anisotropic, noncommutative, nonsymmetric spacetime
geometries and that all the correct features of the standard physical
interactions and evolution models are obtained. Such results support the
idea that all geometric and physical information about spacetime, physical
fields and evolution scenarios can be extracted from a corresponding
generalized Dirac operator, and its flows and/or stationary configurations,
on appropriated noncommutative spaces.

Let us outline the some important motivations for a systematic approach to
noncommutative Ricci flow theory provided by certain directions in modern
particle and mathematical physics.

\begin{enumerate}
\item The theory of Ricci flows with nonholonomic constraints:

In a series of papers on Ricci flows and exact solutions in gravity \cite%
{vnhrf2,vrf,vv1,vrf5}, we proved that if the evolution (Hamilton's)
equations are subjected to nonholonomic constraints the Riemannian metrics
and connections positively transform into geometric objects defining
generalized Lagrange--Finsler, nonsymmetric, noncommutative and various
other spaces.

\item The theory of spinors on Riemann--Finsler spaces:

Finsler geometry is not only a straightforward generalization of the concept
of Riemannian space to nonlinear metric elements on tangent bundle. There
were developed a set of new geometric constructions with nonlinear
connection structures and by introducing the concept of nonholonomic
manifold. It is well known that the first example of (later called) Finsler
metric is contained in the famous B. Riemann thesis from 1856, where, for
simplicity, the considerations were restricted only to quadratic forms, see
historical remarks and reviews in Refs. \cite{ma,mhohs,bejancu,bcs,bm} and %
\cite{vrf,vsgg}, on application of Lagrange--Finsler methods to standard
models of physics. But real physical nonlinear phenomena can not be
restricted only to quadratic metric elements and linear connections. It was
a very difficult task to define spinors and write the Dirac equation on
Finsler--Lagrange spaces (and generalizations) working, for instance, with
the Cartan--Finsler canonical nonlinear and linear connections, see results
outlined in Refs. \cite{vspinor,vhspinor,vstav,vvicol}. Having defined the
Dirac--Finsler/Lagrange operators, induced by the canonical distinguished
connection, it was not a problem to construct noncommutative versions of
spaces with generic local anisotropy (for instance, different models of
noncommutative Riemann--Finsler geometry, noncommutative geometric mechanics,
the constructions are summarized in the Part III of monograph \cite{vsgg}).

\item String theory and gauge gravity models:

Effective locally anisotropic (super) gravity models were derived in low
energy limits of string/M--theory, see Ref. \cite{vstrf,vncsup}. The
so--called absolute anti--symmetric torsion is a ''source'' for
noncommutative coordinate relations in such theories; on such nonholonomic
configurations, see Chapters 13 and 14 in monograph \cite{vsgg}. Here we
note that noncommutative gauge gravity models can be generated by applying
the Seiberg--Witten transform \cite{seiberg} to gauge gravity theories \cite%
{vncgr,vncg,caciatori,moffat,castro1,jurco,aschieri}. Beta functions and
renormalization problems in such theories result, in general, in
nonholonomic and noncommutative Ricci flow evolution equations.

\item Exact solutions with generic off--diagonal metrics and nonholonomic variables in gravity:

There were constructed and analyzed a number of exact solutions in modern
gravity theories, see reviews of results and references in \cite%
{vijmmp,vrfg,vsgg}, following the idea that considering nonholonomic
distributions on a commutative Einstein manifold, defined by nonholonomic
moving frames, it is possible to model Finsler like structures and
generalizations in Einstein/ string/ gauge ... gravity theory. Such
constructions are not for vector/tangent bundles, but for the (pseudo)
Riemannian/Einstein and Riemann--Cartan manifolds with local fibered
structure. Geometrically, a nonholonomic structure induces a formal torsion
even on (pseudo) Riemannian manifolds. In such cases, it is possible to work
equivalently both with the Levi--Civita and the Cartan connection, or other
metric linear connection structures completely defined by a metric. For the
Levi--Civita case, the torsion is zero, but in other cases the effective
torsions are induced by certain off-diagonal coefficients of the metric, via
nonholonomic deformations. Constructing noncommutative analogs of exact
off-diagonal solutions in different models of gravity, one obtains
noncommutative models of Finser geometries and generalizations. We emphasize
that the approach can be elaborated for standard commutative and
noncommutative models in physics and related to solutions of the Ricci flow
theory with nonholonomic variables, see \cite{vrf5,vrf,vv1} and references
therein.

\item Fedosov quantization of Einsein gravity and quantum
Lagange--Finsler spaces:

In a series of recent works, see \cite{vpla,esv,vlgr} and discussed there
reference, it was proved that the Einstein gravity can be alternatively
described in the so-called Finsler-Lagrange and almost K\"{a}hler variables
(similarly, there are equivalent formulations of the general relativity
theory in spinor, tetradic, differential forms, tensorial form etc) and
quantized following the methods of deformation quantization. Applying to
nonholonomic (pseudo) Riemannian manifolds the geometric technique developed
by Fedosov for deformation quantization, we proved that the Einstein,
Lagrange--Finsler, Hamilton--Cartan and generalized spaces can be
quantized following such methods. Using the corresponding nonholonomic Dirac
operators and spinor structures, it is possible to define generalized
Finsler like spectral triples and to define noncommutative Fedosov--Einstein,
Fedosov--Finsler etc spaces which for corresponding special cases result in
already quantized (in the meaning of deformation quantization) geometries
and their Ricci flows.
\end{enumerate}

As future directions, it might be worthwhile to pursue the results of this
paper for computing noncommutative Ricci flow corrections to physically
valuable exact solutions in gravity and elaborating noncommutative versions
of quantum gravity models in almost K\"{a}hler variables quantized following
Fedosov methods.

\vskip3pt

\textbf{Acknowledgement: } The work was partially  performed during a visit volunteer research work at Fields Institute. Author is grateful to referee for  very useful critics and important suggestions.

\appendix

\setcounter{equation}{0} \renewcommand{\theequation}
{A.\arabic{equation}} \setcounter{subsection}{0}
\renewcommand{\thesubsection}
{A.\arabic{subsection}}

\section{N--adapted Linear Connections}

The class of linear connection on a N--anholonomic manifolds splits into two
subclasses of those which are adapted or not to a given N--connection
structure.

A distinguished connection\ (d--connection, or N--adapted linear
connection)\ $\mathbf{D}$ on a N--anho\-lo\-no\-mic manifold $\mathbf{V}$ is
a linear connection conserving under parallelism the Whitney sum (\ref%
{whitney}). For any d--vector $\mathbf{X,}$ there is a decomposition of $%
\mathbf{D}$ into h-- and v--covariant derivatives,%
\begin{equation*}
\mathbf{D}_{\mathbf{X}}\mathbf{\doteqdot X}\rfloor \mathbf{D=}\ hX\rfloor
\mathbf{D+}\ vX\rfloor \mathbf{D=}Dh_{X}+D_{vX}=hD_{X}+vD_{X},
\label{dconcov}
\end{equation*}%
where the symbol ''$\rfloor "$ denotes the interior product. We shall write
conventionally that $\mathbf{D=}(hD,\ vD),$ or $\mathbf{D}_{\alpha
}=(D_{i},D_{a}).$ With respect to N--adapted bases (\ref{dder}) and (\ref%
{ddif}), the local formulas for d--connections a parametrized in the form: $%
\mathbf{D=\{\Gamma }_{\ \alpha \beta }^{\gamma }=\left(
L_{jk}^{i},L_{bk}^{a},C_{jc}^{i},C_{bc}^{a}\right) \},$ with $%
hD=(L_{jk}^{i},L_{bk}^{a})$ and $vD=(C_{jc}^{i},C_{bc}^{a}).$

The N--adapted components $\mathbf{\Gamma }_{\ \beta \gamma }^{\alpha }$ of
a d--connection $\mathbf{D}_{\alpha }=(\mathbf{e}_{\alpha }\rfloor \mathbf{D}%
),$ where ''$\rfloor $'' denotes the interior product, are computed
following equations
\begin{equation}
\mathbf{D}_{\alpha }\mathbf{e}_{\beta }=\mathbf{\Gamma }_{\ \alpha \beta
}^{\gamma }\mathbf{e}_{\gamma },\mbox{\ or \ }\mathbf{\Gamma }_{\ \alpha
\beta }^{\gamma }\left( u\right) =\left( \mathbf{D}_{\alpha }\mathbf{e}%
_{\beta }\right) \rfloor \mathbf{e}^{\gamma },  \label{dcon1}
\end{equation}%
where, by definition, $L_{jk}^{i}=\left( \mathbf{D}_{k}\mathbf{e}_{j}\right)
\rfloor e^{i},\quad L_{bk}^{a}=\left( \mathbf{D}_{k}e_{b}\right) \rfloor
\mathbf{e}^{a},~C_{jc}^{i}=\left( \mathbf{D}_{c}\mathbf{e}_{j}\right)
\rfloor e^{i},$ $C_{bc}^{a}=\left( \mathbf{D}_{c}e_{b}\right) \rfloor
\mathbf{e}^{a} $ are computed for N--adapted frames (\ref{ddif}) and (\ref%
{dder}).

In the subclass of d--connections $\mathbf{D}$ on $\mathbf{V,}$ for standard
physical applications, it is convenient to work with d--metric compatible
d--connections (metrical d--connections) satisfying the condition $\mathbf{%
Dg=0}$ including all h- and v-projections $%
D_{j}g_{kl}=0,D_{a}g_{kl}=0,D_{j}h_{ab}=0,D_{a}h_{bc}=0.$

The torsion of a d--connection $\mathbf{D=}(hD,\ vD),$ for any d--vectors $%
\mathbf{X,Y}$ is defined by the d--tensor field
\begin{equation}
\mathbf{T(X,Y)\doteqdot \mathbf{D}_{\mathbf{X}}Y-D}_{\mathbf{Y}}\mathbf{%
X-[X,Y].}  \label{tors1}
\end{equation}

One has a N--adapted decomposition
\begin{equation*}
\mathbf{T(X,Y)=T(}hX,hY\mathbf{)+T(}hX,\ vY\mathbf{)+T(}vX,hY\mathbf{)+T(}%
vX,\ vY).
\end{equation*}%
The d--torsions $hT(hX,hY),vT(vX,vY),...$ are called respectively the $h$ $%
(hh)$--torsion, $v~(vv)$--torsion and so on.

We can also consider a N--adapted differential 1--form $\mathbf{\Gamma }_{\
\beta }^{\alpha }=\mathbf{\Gamma }_{\ \beta \gamma }^{\alpha }\mathbf{e}%
^{\gamma },$ 
from which we can compute the torsion $\mathcal{T}^{\alpha }\doteqdot
\mathbf{De}^{\alpha }=d\mathbf{e}^{\alpha }+\Gamma _{\ \beta }^{\alpha
}\wedge \mathbf{e}^{\beta }.$ Locally, we get the (N--adapted) d--torsion
coefficients
\begin{eqnarray}
T_{\ jk}^{i} &=&L_{\ jk}^{i}-L_{\ kj}^{i},\ T_{\ ja}^{i}=-T_{\ aj}^{i}=C_{\
ja}^{i},\ T_{\ ji}^{a}=\Omega _{\ ji}^{a},\   \notag \\
T_{\ bi}^{a} &=&-T_{\ ib}^{a}=\frac{\partial N_{i}^{a}}{\partial y^{b}}-L_{\
bi}^{a},\ T_{\ bc}^{a}=C_{\ bc}^{a}-C_{\ cb}^{a}.  \label{dtors}
\end{eqnarray}

The curvature of a d--connection $\mathbf{D}$ is defined
\begin{equation*}
\mathbf{R(X,Y)\doteqdot \mathbf{D}_{\mathbf{X}}\mathbf{D}_{\mathbf{Y}}-D}_{%
\mathbf{Y}}\mathbf{D}_{\mathbf{X}}\mathbf{-D}_{\mathbf{[X,Y]}}
\end{equation*}%
for any d--vectors $\mathbf{X,Y.}$ By a straightforward d--form calculus, we
can find the N--adapted components of the curvature
\begin{equation}
\mathcal{R}_{~\beta }^{\alpha }\doteqdot \mathbf{D\Gamma }_{\ \beta
}^{\alpha }=d\mathbf{\Gamma }_{\ \beta }^{\alpha }-\mathbf{\Gamma }_{\ \beta
}^{\gamma }\wedge \mathbf{\Gamma }_{\ \gamma }^{\alpha }=\mathbf{R}_{\ \beta
\gamma \delta }^{\alpha }\mathbf{e}^{\gamma }\wedge \mathbf{e}^{\delta },
\label{curv}
\end{equation}%
of a d--connection $\mathbf{D},$ i.e. the d--curvatures:
\begin{eqnarray*}
R_{\ hjk}^{i} &=&e_{k}L_{\ hj}^{i}-e_{j}L_{\ hk}^{i}+L_{\ hj}^{m}L_{\
mk}^{i}-L_{\ hk}^{m}L_{\ mj}^{i}-C_{\ ha}^{i}\Omega _{\ kj}^{a},  \notag \\
R_{\ bjk}^{a} &=&e_{k}L_{\ bj}^{a}-e_{j}L_{\ bk}^{a}+L_{\ bj}^{c}L_{\
ck}^{a}-L_{\ bk}^{c}L_{\ cj}^{a}-C_{\ bc}^{a}\Omega _{\ kj}^{c},  \notag \\
R_{\ jka}^{i} &=&e_{a}L_{\ jk}^{i}-D_{k}C_{\ ja}^{i}+C_{\ jb}^{i}T_{\
ka}^{b},  \label{dcurv} \\
R_{\ bka}^{c} &=&e_{a}L_{\ bk}^{c}-D_{k}C_{\ ba}^{c}+C_{\ bd}^{c}T_{\
ka}^{c},  \notag \\
R_{\ jbc}^{i} &=&e_{c}C_{\ jb}^{i}-e_{b}C_{\ jc}^{i}+C_{\ jb}^{h}C_{\
hc}^{i}-C_{\ jc}^{h}C_{\ hb}^{i},  \notag \\
R_{\ bcd}^{a} &=&e_{d}C_{\ bc}^{a}-e_{c}C_{\ bd}^{a}+C_{\ bc}^{e}C_{\
ed}^{a}-C_{\ bd}^{e}C_{\ ec}^{a}.  \notag
\end{eqnarray*}

The Ricci tensor $\mathbf{R}_{\alpha \beta }\doteqdot \mathbf{R}_{\ \alpha
\beta \tau }^{\tau }$ is characterized by h- v--components, i.e. d--tensors,%
\begin{equation}
R_{ij}\doteqdot R_{\ ijk}^{k},\ \ R_{ia}\doteqdot -R_{\ ika}^{k},\
R_{ai}\doteqdot R_{\ aib}^{b},\ R_{ab}\doteqdot R_{\ abc}^{c}.
\label{dricci}
\end{equation}%
The scalar curvature of a d--connection is
\begin{equation}
\ _{s}\mathbf{R}\doteqdot \mathbf{g}^{\alpha \beta }\mathbf{R}_{\alpha \beta
}=g^{ij}R_{ij}+h^{ab}R_{ab},  \label{sdccurv}
\end{equation}%
defined by a sum the h-- and v--components of (\ref{dricci}) and d--metric (%
\ref{m1}).

For any metric structure $\mathbf{g}$ on a manifold $\mathbf{V,}$ there is
the unique metric compatible and torsionless Levi--Civita connection $\nabla $
for which $\ ^{\nabla }\mathcal{T}^{\alpha }=0$ and $\nabla \mathbf{g=0.}$
This is not a d--connection because it does not preserve under parallelism the N--connection splitting (\ref{whitney}) (it is not adapted to the N--connection structure).

\begin{theorem}
\label{thcdc}For any d--metric $\mathbf{g}=[hg,vg]$ on a N--anholonomic
manifold $\mathbf{V,}$ there is a unique metric canonical d--connection $%
\widehat{\mathbf{D}}$ satisfying the conditions $\widehat{\mathbf{D}}\mathbf{%
g=}0$ and with vanishing $h(hh)$--torsion, $v(vv)$--torsion, i. e. $h%
\widehat{T}(hX,hY)=0$ and $\mathbf{\ }v\widehat{T}(vX,\mathbf{\ }vY)=0.$
\end{theorem}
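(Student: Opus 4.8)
The plan is to prove existence and uniqueness separately, working entirely in the N-adapted frame $(\mathbf{e}_i,e_a)$ of (\ref{dder}) and its coframe (\ref{ddif}), so that the block structure of the d-metric (\ref{m1}) is preserved throughout. A d-connection is completely described by its N-adapted coefficients $\widehat{\mathbf{\Gamma}}_{\ \alpha\beta}^{\gamma}=(L_{jk}^i,L_{bk}^a,C_{jc}^i,C_{bc}^a)$ via (\ref{dcon1}), and because the d-metric is block-diagonal the compatibility condition $\widehat{\mathbf{D}}\mathbf{g}=0$ splits into the four independent projections $D_j g_{kl}=0$, $D_a g_{kl}=0$, $D_j h_{bc}=0$, $D_a h_{bc}=0$. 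The two torsion requirements translate, via the coefficient formulas (\ref{dtors}), into the symmetry conditions $L_{jk}^i=L_{kj}^i$ (vanishing $h(hh)$-torsion) and $C_{bc}^a=C_{cb}^a$ (vanishing $v(vv)$-torsion). The remaining torsion components $T_{\ ja}^i$, $T_{\ ji}^a=\Omega_{\ ji}^a$ and $T_{\ bi}^a$ are left unconstrained, which is the crucial point distinguishing $\widehat{\mathbf{D}}$ from a Levi--Civita connection: one cannot ask the full torsion to vanish because $\Omega_{\ ji}^a$ from (\ref{ncurv}) is fixed by the nonholonomic structure.

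For uniqueness I would run a Koszul-type argument sector by sector. In the pure h-sector, $D_k g_{ij}=0$ written for the three cyclic permutations of $(i,j,k)$, combined with the symmetry $L_{jk}^i=L_{kj}^i$, determines $L_{jk}^i$ uniquely as a Christoffel-type expression in the N-elongated derivatives $\mathbf{e}_k g_{ij}$ (the anholonomy contributes only through the operators $\mathbf{e}_k$, not through extra torsion terms, precisely because the symmetrization kills them). The pure v-sector is identical in form: $D_c h_{ab}=0$ cyclically permuted in $(a,b,c)$ together with $C_{bc}^a=C_{cb}^a$ fixes $C_{bc}^a$ from $e_c h_{ab}$. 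The mixed coefficients are then read off directly from the remaining projections: $D_a g_{ij}=0$ yields $C_{jc}^i$ and $D_k h_{ab}=0$ yields $L_{bk}^a$, where the canonical N-adapted prescription ties $L_{bk}^a$ to $e_b N_k^a$ so as to fix the part of these coefficients not constrained by metric compatibility alone. The resulting coefficients are exactly (\ref{candcon}).

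For existence I would then take the coefficients (\ref{candcon}) produced above as a definition and verify, by direct substitution into the four compatibility projections and into (\ref{dtors}), that all conditions of the theorem hold; since every coefficient is expressed through $\mathbf{g}$, $\mathbf{N}$ and the frame (\ref{dder})--(\ref{ddif}), the object so defined is automatically N-adapted and hence a genuine d-connection preserving the splitting (\ref{whitney}).

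The main obstacle I anticipate lies in the mixed sectors rather than in the Koszul step: metric compatibility alone fixes only the $h$-symmetric part of $L_{bk}^a$ (and the corresponding part of $C_{jc}^i$), so I must make explicit the additional N-adapted normalization --- encoding the coefficients through $e_b N_k^a$ and $\Omega_{\ ij}^a$ --- that singles out the canonical representative, and show it is forced once one insists that $\widehat{\mathbf{D}}$ be built solely from $\mathbf{g}$ and the N-connection. Carefully tracking the anholonomy coefficients $W_{\alpha\beta}^\gamma$ of (\ref{anhrel}) through these manipulations, and confirming that they enter only the unconstrained torsion components and never spoil the symmetric determination of $L_{jk}^i$ and $C_{bc}^a$, is where the argument needs the most attention.
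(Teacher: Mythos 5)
Your overall strategy is sound and in fact more ambitious than the paper's own proof, which consists only of the existence half: it writes down the explicit coefficients (\ref{candcon}) and asserts that metricity and the two torsion conditions can be checked by direct computation. Your Koszul argument in the pure sectors is correct: $D_k g_{ij}=0$ cyclically permuted together with $L^i_{jk}=L^i_{kj}$ forces the Christoffel-type formula for $\widehat{L}^i_{jk}$ in the N-elongated derivatives $\mathbf{e}_k$, and likewise $D_c h_{ab}=0$ with $C^a_{bc}=C^a_{cb}$ forces $\widehat{C}^a_{bc}$; the anholonomy coefficients of (\ref{anhrel}) indeed enter only through the unconstrained torsion components in (\ref{dtors}).

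The genuine gap is exactly where you anticipated it, and it cannot be closed by the principle you propose. The condition $D_c g_{ij}=0$ determines only the part of $C_{ijc}:=g_{li}C^l_{jc}$ symmetric in $(i,j)$; adding to $\widehat{C}^i_{jc}$ any tensor $A^i_{jc}$ with $g_{li}A^l_{jc}=-g_{lj}A^l_{ic}$ preserves metric compatibility, preserves N-adaptedness, and leaves the $h(hh)$- and $v(vv)$-torsions untouched, since those involve only $L^i_{jk}$ and $C^a_{bc}$. The same freedom exists for $L^a_{bk}$ via $D_k h_{ab}=0$. Hence the three conditions listed in the theorem do not by themselves single out a unique d-connection, and the requirement that the connection be ``built solely from $\mathbf{g}$ and $\mathbf{N}$'' is too vague to force the antisymmetric parts to vanish (one can manufacture nonzero natural candidates for $A^i_{jc}$ from derivatives of $\mathbf{g}$ and $\mathbf{N}$). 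To make the uniqueness claim rigorous you must either impose the additional normalizations explicitly --- e.g.\ $g_{li}C^l_{jc}=g_{lj}C^l_{ic}$ together with the analogous symmetry for the $h$-part of $L^a_{bk}-e_b(N^a_k)$, which is precisely what the explicit formulas (\ref{candcon}) encode --- or, as the paper implicitly does, take (\ref{candcon}) as the definition of ``canonical'' and prove only that this particular d-connection satisfies the stated properties.
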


\begin{proof}
By straightforward calculations, we can verify that the d--connec\-ti\-on
with coefficients $\widehat{\mathbf{\Gamma }}_{\ \alpha \beta }^{\gamma
}=\left( \widehat{L}_{jk}^{i},\widehat{L}_{bk}^{a},\widehat{C}_{jc}^{i},%
\widehat{C}_{bc}^{a}\right) ,$ for
\begin{eqnarray}
\widehat{L}_{jk}^{i} &=&\frac{1}{2}g^{ir}\left(
e_{k}g_{jr}+e_{j}g_{kr}-e_{r}g_{jk}\right) ,  \label{candcon} \\
\widehat{L}_{bk}^{a} &=&e_{b}(N_{k}^{a})+\frac{1}{2}h^{ac}\left(
e_{k}h_{bc}-h_{dc}\ e_{b}N_{k}^{d}-h_{db}\ e_{c}N_{k}^{d}\right) ,  \notag \\
\widehat{C}_{jc}^{i} &=&\frac{1}{2}g^{ik}e_{c}g_{jk},\ \widehat{C}_{bc}^{a}=%
\frac{1}{2}h^{ad}\left( e_{c}h_{bd}+e_{c}h_{cd}-e_{d}h_{bc}\right) .  \notag
\end{eqnarray}
satisfies the condition of Theorem.$\square $
\end{proof}

In modern classical and quantum gravity theories defined by a (pseudo)
Riemannian metric structure $\mathbf{g}$ (\ref{metr}), it is preferred to
work only with the Levi--Civita connection $\nabla (\mathbf{g})=\{~\
_{\shortmid }\Gamma (\mathbf{g})\},$ which is uniquely defined by this
metric structure. Nevertheless, for a given N--connection splitting $\mathbf{%
N}$ on a nonholonomic manifold $\mathbf{V,}$ with redefinition of the metric
structure in the form $\mathbf{g}$ (\ref{ansatz}), there is an infinite
number of metric compatible d--connections uniquely defined by $\mathbf{g}$ (%
\ref{metr}), equivalently by $\mathbf{g}$ (\ref{ansatz}) and $\mathbf{N.}$

The Levi--Civita linear connection $\bigtriangledown =\{\Gamma _{\beta
\gamma }^{\alpha }\},$ uniquely defined by the conditions $~\mathcal{T}=0$
and $\bigtriangledown g=0,$ is not adapted to the distribution (\ref{whitney}%
). There is an extension of the Levi--Civita connection $\nabla $ to a
canonical d--connection $\widehat{\mathbf{D}}=\{\widehat{\mathbf{\Gamma }}%
_{\ \alpha \beta }^{\gamma }\}$ (\ref{candcon}), which is metric compatible
and defined only by a metric $\mathbf{g}$ when $\widehat{T}_{\ jk}^{i}=0$
and $\widehat{T}_{\ bc}^{a}=0$ but $\widehat{T}_{\ ja}^{i},\widehat{T}_{\
ji}^{a}$ and $\widehat{T}_{\ bi}^{a}$ are not zero, see (\ref{dtors}).

A straightforward calculus shows that the coefficients of the Levi--Civita
connection can be expressed in the form
\begin{equation}
\ \Gamma _{\ \alpha \beta }^{\gamma }=\widehat{\mathbf{\Gamma }}_{\ \alpha
\beta }^{\gamma }+Z_{\ \alpha \beta }^{\gamma },  \label{cdeft}
\end{equation}%
where
\begin{eqnarray}
Z_{jk}^{i} &=&0,\ Z_{jk}^{a}=-C_{jb}^{i}g_{ik}h^{ab}-\frac{1}{2}\Omega
_{jk}^{a},~Z_{bk}^{i}=\frac{1}{2}\Omega _{jk}^{c}h_{cb}g^{ji}-\Xi
_{jk}^{ih}~C_{hb}^{j},  \notag \\
Z_{bk}^{a} &=&~^{+}\Xi _{cd}^{ab}~\left[ L_{bk}^{c}-e_{b}(N_{k}^{c})\right]
,\ Z_{kb}^{i}=\frac{1}{2}\Omega _{jk}^{a}h_{cb}g^{ji}+\Xi
_{jk}^{ih}~C_{hb}^{j},  \label{cdeftc} \\
\ Z_{jb}^{a} &=&-~^{-}\Xi _{cb}^{ad}~~^{\circ }L_{dj}^{c},\
Z_{bc}^{a}=0,Z_{ab}^{i}=-\frac{g^{ij}}{2}\left[ ~^{\circ
}L_{aj}^{c}h_{cb}+~^{\circ }L_{bj}^{c}h_{ca}\right] ,  \notag \\
\Xi _{jk}^{ih} &=&\frac{1}{2}(\delta _{j}^{i}\delta
_{k}^{h}-g_{jk}g^{ih}),~^{\pm }\Xi _{cd}^{ab}=\frac{1}{2}(\delta
_{c}^{a}\delta _{d}^{b}+h_{cd}h^{ab}),  \notag
\end{eqnarray}%
for $\Omega _{jk}^{a}$ computed as in formula (\ref{ncurv}), $~^{\circ
}L_{aj}^{c}=L_{aj}^{c}-e_{a}(N_{j}^{c})$ and
\begin{eqnarray*}
\Gamma _{\beta \gamma }^{\alpha } &=&\left(
L_{jk}^{i},L_{jk}^{a},L_{bk}^{i},L_{bk}^{a},C_{jb}^{i},C_{jb}^{a},C_{bc}^{i},C_{bc}^{a}\right) ,
\\
\bigtriangledown _{\mathbf{e}_{k}}(\mathbf{e}_{j}) &=&~L_{jk}^{i}\mathbf{e}%
_{i}+L_{jk}^{a}e_{a},\bigtriangledown _{\mathbf{e}_{k}}(e_{b})=L_{bk}^{i}%
\mathbf{e}_{i}+L_{bk}^{a}e_{a}, \\
\bigtriangledown _{e_{b}}(\mathbf{e}_{j}) &=&~C_{jb}^{i}\mathbf{e}%
_{i}+C_{jb}^{a}e_{a},\bigtriangledown _{e_{c}}(e_{b})=C_{bc}^{i}\mathbf{e}%
_{i}+C_{bc}^{a}e_{a}.
\end{eqnarray*}%
It should be emphasized that all components of $\ \Gamma _{\ \alpha \beta
}^{\gamma },\widehat{\mathbf{\Gamma }}_{\ \alpha \beta }^{\gamma }$ and$\
Z_{\ \alpha \beta }^{\gamma }$ are defined by the coefficients of \
d--metric $\mathbf{g}$ (\ref{m1}) and N--connection $\mathbf{N}$ (\ref%
{coeffnc}), or equivalently by the coefficients of the corresponding generic
off--diagonal metric\ (\ref{ansatz}).

For instance, such a principle can be defined by any condition to construct
from the given metric coefficients and a $(n+n)$--splitting a unique
d--connection compatible to the canonical almost complex structure (this is
the so--called Cartan connection), or admitting a straightforward
application of Fedosov quantization in Einstein gravity, or of
Finsler--Lagrange geometry. Such constructions were recently developed in
order to construct more general classes of exact solutions in gravity \cite%
{vijmmp,vncg} (see also Part II in \cite{vsgg}), physical applications of
Ricci flow theory \cite{vrf,vv1,vnhrf2} and Fedosov quantization of Einstein
gravity in almost K\"{a}hler and/or Finsler--Lagrange variables \cite%
{vpla,vlgr}.

\end{document}